 \newcommand{%
     \scalebox{}{\input{}}  
}[2]{%
     \scalebox{#1}{\input{#2}}  
}
\theoremstyle{plain}
\newtheorem{theorem}{Theorem}
\newtheorem{corollary}{Corollary}
\newtheorem{lemma}{Lemma}
\newtheorem{proposition}{Proposition}
\theoremstyle{definition}
\newtheorem{definition}{Definition}
\theoremstyle{remark}
\newtheorem{remark}{Remark}
\newcommand{\mbf}{\mathbf}
\newcommand{\mbb}{\mathbb}
\newcommand{\mcl}{\mathcal}
\newcommand{\mrm}{\mathrm}
\newcommand{\bld}{\boldsymbol}
\def\alphabet{abcdefghijklmnopqrstuvwxyzABCDEFGHIJKLMNOPQRST123456789}
\renewcommand{\vec}[1]{
\IfSubStr{\alphabet}{#1}{
\ensuremath{\mathbf{\MakeLowercase{#1}}}
}{
\ensuremath{\boldsymbol{\MakeLowercase{#1}}}
}
}
\newcommand{\mat}[1]{
\IfSubStr{\alphabet}{#1}{
\ensuremath{\mathbf{\MakeUppercase{#1}}}
}{
\ensuremath{\boldsymbol{\MakeUppercase{#1}}}
}
}
\def\R{\mathbb R}
\def\C{\mathbb C}
\def\N{\mathbb N}
\newcommand*{\inner}[2]{\left\langle#1,#2\right\rangle}
\newcommand*{\norm}[1]{\left\|#1\right\|}
\newcommand*{\paran}[1]{\left(#1\right)}
\newcommand*{\bracket}[1]{\left\{#1\right\}}
\newcommand*{\card}[1]{\left|#1\right|}
\def\defeq{:=}
\def\d{\mathrm{d}} 
\def\st{\text{ subject to }}
\def\S{\mathbb{S}} 
\def\SO{\mathrm{SO}} 
\newcommand*{\Y}[2]{\mathrm{Y}_{#1}^{#2}} 
\newcommand*{\D}[3]{\mathrm{D}_{#1}^{#2,#3}} 
\newcommand*{\Wd}[3]{\mathrm{d}_{#1}^{#2,#3}} 
\def\i{\mathrm{i}}
\acrodef{CS}{Compressed Sensing}
\acrodef{BP}{Basis Pursuit}
\acrodef{OMP}{Orthogonal Matching Pursuit}
\acrodef{AMP}{Approximate Message Passing}
\acrodef{RIP}{Restricted Isometry Property}
\acrodef{BOS}{Bounded Orthonormal System}
\acrodef{IGRF}{International Geomagnetic Reference Field}
\begin{document}

\title{Sensing Matrix Design and Sparse Recovery on the Sphere and the Rotation Group}

\author{Arya Bangun, Arash Behboodi, and Rudolf Mathar\footnote{
Institute for Theoretical Information Technology, RWTH Aachen University 
}
}
\date{}
\maketitle 
\begin{abstract}
In this paper, {the goal is to design deterministic sampling patterns on the sphere and the rotation group} and, thereby, construct sensing matrices for sparse recovery of band-limited functions. It is first shown that random sensing matrices, which consists of random samples of Wigner D-functions,  satisfy the \acf{RIP} with proper preconditioning and can be used for sparse recovery on the rotation group.  The mutual coherence, however, is used to assess the performance of
deterministic and regular sensing matrices. 
 We show that many of widely used regular sampling patterns yield sensing matrices with the worst possible mutual coherence, and therefore are undesirable for sparse recovery. Using tools from angular momentum analysis in quantum mechanics, we provide a new expression for the mutual coherence, which encourages the use of regular elevation samples. We construct low coherence deterministic matrices by fixing the regular samples on the elevation and minimizing the mutual coherence over the azimuth-polarization choice. It is shown that once the elevation sampling is fixed, the mutual coherence has a lower bound that depends only on the elevation samples. This lower bound, however, can be achieved for spherical harmonics, which leads to new sensing matrices with better coherence than other representative regular sampling patterns. This is reflected as well in our numerical experiments where our proposed sampling patterns perfectly match the phase transition of random sampling patterns.
\end{abstract}

\section{Introduction}
In many applications, where the goal is to recover a sparse signal from the fewest linear  measurements, the measurement process cannot be freely chosen. That is, in the corresponding linear inverse problem, the sensing matrix has a specific structure. A central question, therefore, is to design the sensing matrix under these additional restrictions.

For general sensing matrices, the pioneering works of compressed sensing \cite{candes2005decoding,candes2006robust,candes_near-optimal_2006} followed by overwhelming subsequent researches established recovery guarantees for various random matrices including subgaussian random matrices. These random matrices are shown to satisfy, with high probability, the \acf{RIP}, which is a sufficient condition for noise-robust sparse recovery. Many efficient algorithms such as \ac{BP} can provably recover the original signal from these measurements 
(see \cite{foucart2013mathematical} for an exhaustive treatment of the subject).

 In contrast to pure random matrix designs, in many applications, the sensing medium imposes additional structures on sensing matrices. Notable examples are sensing matrices that are obtained from sampling  functions in finite-dimensional function spaces. The sensing matrix entries in these applications are samples of orthonormal basis functions of the ambient space. Fourier matrices \cite{rudelson_sparse_2008}, matrices from trigonometric polynomials \cite{rauhut2007random}, orthogonal polynomials \cite{rauhut2012sparse,szeg1939orthogonal} and spherical harmonics \cite{rauhut2011sparse,burq2012weighted} are some examples of these matrices.  Fortunately, when the orthonormal functions are uniformly bounded, also called  \acp{BOS}, a similar recovery guarantee can be obtained. If the samples are taken randomly from a certain probability measure, \ac{BOS} matrices are proven to satisfy the \ac{RIP} property \cite[Chapter 12]{foucart2013mathematical}. If the orthonormal functions are uniformly bounded by $K$, the required number of measurements scales with $K^2$.

This randomness in the measurement process, however, is inadmissible in many applications, for instance when the measurement process involves movements of mechanical devices. Random measurements require arbitrary movements that are possibly harmful to the measurement device. In these applications, the measurement process should be designed by considering the physical characteristics of the measurement device. An example, which is the main motivation of the current work, is the antenna measurement application. The samples in antenna measurements are taken using a robotic arm or which samples of a smooth trajectory are preferred over random samples. Therefore, regular sampling patterns like equiangular patterns are widely used for the measurement process. 
The desired sensing matrices should be both structured, since it involves samples of orthonormal functions, and deterministic, which should bring about regular sampling patterns. In this paper, our goal is to address these requirements step by step for sparse recovery in the space of band-limited square-integrable functions over the sphere $\S^2$ and the rotation group $\SO(3)$. These functions appear in a wide range of applications such as antenna measurements \cite{CoHeKoBeMa0}, geophysics \cite{thebault2015international}, spherical microphone arrays  \cite{rafaely_analysis_2005}, and astrophysics \cite{jarosik_seven-year_2011}.

Consider random measurements first. The orthonormal functions over $\S^2$ and $\SO(3)$ are spherical harmonics and Wigner D-functions, random samples of which constitutes the entries of the sensing matrix. The upper bound of these functions, $K$, is a function of the ambient dimension $N$. {For example, the bound $K$ for band-limited spherical harmonics with bandwidth $B$ is equal to $\sqrt{({2B-1})/{4\pi} }$. The number of band-limited functions $N$ is equal to ${B^2}$, which implies that 
$K=\sqrt{({2\sqrt{N}-1})/{4\pi} }$.} When this is plugged in the recovery guarantees for \acp{BOS}, it would imply that the number of measurements should scale badly with the dimension $N$. This bound is useless for sparse recovery analysis. Rauhut and Ward used a preconditioning technique in \cite{rauhut2011sparse} and improved the dependence to $N^{1/4}$. Burq et al. improved this further to $N^{1/6}$ in  \cite{burq2012weighted}. These results, however, do not directly generalize to Wigner D-functions. 

As soon as we move to deterministic sampling patterns, the \ac{RIP} cannot be used to appraise the sparse recovery capability of the sensing matrices. It is computationally hard to certify that a certain matrix satisfies \ac{RIP} \cite{tillmann2014computational,bandeira2013certifying}. A common metric for deterministic sensing matrices is the mutual coherence. It is defined as the maximum of the absolute value of normalized inner products between columns of the sensing matrix. Unlike \ac{RIP}, the mutual coherence can be numerically evaluated for a given matrix, and therefore it is a computable figure of merit for sparse recovery. The mutual coherence of a matrix can also be used to provide recovery guarantees, although it leads to a suboptimal dependence on the sparsity order. In general, sensing matrices with low mutual coherence tend to have better sparse recovery performance. Therefore, constructing a sensing matrix with low mutual coherence has been widely investigated in recent years because of its extensive application in many different areas, from coding theory and communication   \cite{strohmer2003grassmannian,love2003grassmannian}, compressed sensing \cite{elad2007optimized,candes2006robust,candes_near-optimal_2006,cande2008introduction,tropp2004greed},  quantum measurement \cite{eldar2002optimal}, and machine learning \cite{drineas2012fast,mohri2011can}. The mutual coherence is lower bounded by the Welch bound, obtained in the context of correlation measurements of different signals \cite{welch1974lower}.
The lower bound is tight  and can be achieved by equiangular and tight frames \cite{strohmer2003grassmannian}. A similar result for structured matrices is not known to the best of our knowledge. 

{\subsection{Related Works}}
{Deterministic sampling patterns on the sphere $\S^2$ have been studied extensively in the context of Shannon-Nyquist sampling for the reconstruction of band-limited functions (see \cite{mcewen_novel_2011,hagai_generalized_2012,mcewen2013sparse} and references therein). As mentioned by McEwen and Wiaux in \cite{mcewen_novel_2011}, some of these techniques can be used to enhance the performance of compressed sensing methods. Equiangular sampling patterns are often the standard in these applications. In these works, to represent a band-limited function with a bandwidth of $B$, the number of samples should scale as  $\mcl{O}(B^2)$, which is linear in the ambient dimension. For high bandwidth signals, it implies a long measurement time. Compressed sensing approach can circumvent this issue by leveraging the sparsity structure in the signal. We will see, however, that equiangular sampling patterns are not good choices for compressed sensing.} 

{\ac{CS} over sphere has been considered in a few works. Deterministic sensing matrix design from spherical harmonics was considered in \cite{alem_sparse_2012} where spiral sampling points are used as the basis for the design. They show that the proposed sampling points outperform equiangular sampling. However, those works emphasize numerical evaluations of the sparse recovery without analyzing the structure of sensing matrices, discussing the achievable coherence bounds or providing design guidelines. The authors in  \cite{alem_sparse_2014} considered probabilistic \ac{CS} to provide recovery guarantees without using RIP. Relying on preconditioning approaches of  \cite{burq2012weighted}, the approach provided a probabilistic recovery guarantees for which the number of measurements depends on $N^{1/6}$ for spherical harmonics. These results usually rely on some conditions on the sensing matrix (see \cite[Chapter 14]{foucart2013mathematical}). Besides, the result does not hold uniformly over all vectors. In many applications, it is difficult to conduct measurements using random sampling patterns. To the best of our knowledge, this work is the first to consider functions over the rotation group and provide mutual coherence-based guidelines for designing deterministic sensing matrices of spherical harmonics and Wigner D-functions.
}

\subsection{Summary of Contributions}
{In this paper, we consider the problem of sensing matrix design and evaluation for sparse recovery of band-limited functions on the sphere $\S^2$ and the rotation group $\SO(3)$. Although there are some works on sensing matrix design over the sphere, the problem is almost unexplored for the rotation group. One of the main contributions of this paper is to study sensing matrix design for compressed sensing over the rotation group. The sensing matrix design boils down to finding $m$ sampling points $(\theta,\phi) \in \S^2$ and 
$(\theta,\phi,\chi) \in \SO(3)$, where $\theta\in[0,\pi],\phi\in[0,2\pi)$ and $\chi\in[0,2\pi)$.
 For spherical harmonics, certain random sampling patterns can be provably used for sparse recovery \cite{rauhut2012sparse}. In Section \ref{Sect:RIP}, we prove that it is also possible to find a pair of points on the rotation group with guaranteed sparse recovery. Specifically, sparse band-limited signals over the rotation group $\SO(3)$ can be uniquely recovered from certain random sampling patterns by solving a convex optimization problem. The proof follows from the \ac{RIP} property of the sensing matrix after preconditioning, which is based on some inequalities for Jacobi polynomials. The required number of samples scale with the ambient dimension as $N^{1/6}$. The recovery algorithm is robust to noise and stable to model inaccuracies. These results show that it is possible to find sampling patterns with recovery guarantees on the sphere and the rotation group. The focus of our paper, however, is on deterministic sampling pattern design. This paper discusses  for the first time compressed sensing over the rotation group. We provide new tools, guidelines and designs for the problem of deterministic sensing matrix design over the sphere and the rotation group. The main contributions of our paper are as follows. }

\begin{itemize}
\item {Adopting the mutual coherence as the figure of merit from Section \ref{Sect:coherence}, we show that certain regular deterministic sampling patterns over the sphere and the rotation group with symmetric structures over $\phi$ and $\chi$ have maximum coherence and, therefore, are not good for sparse recovery. These patterns include many of sampling patterns that are currently widely used in applications, including equiangular sampling patterns.}

\item The mutual coherence is determined by the inner products of vectors of samples  of spherical harmonics and Wigner D-functions. We show in Section \ref{Sect:coherence} that the product of two functions can be seen as the total angular momentum of a composite quantum system. Borrowing this insight from quantum mechanics, the product can be decomposed into a sum of single spherical harmonics and Wigner D-functions using Wigner 3j symbols. To the best of our knowledge, this decomposition is used for the first time for coherence analysis in compressed sensing. We use the above decomposition to derive regular sampling patterns that lead to mutually orthogonal, and therefore incoherent, columns in the sensing matrix. 

\item {In Section \ref{Sect:equispacedSampling}, we propose equispaced sampling patterns on $\theta$, which also leads to incoherent columns. We show that once the sampling points on $\theta$ is fixed the mutual coherence is automatically lower bounded independent of the choice of $\phi$'s and $\chi$'s. It is, however, shown that the lower bound can be achieved for spherical harmonics with our newly proposed sampling pattern. The new sampling pattern is obtained by an algorithm that minimizes the mutual coherence using pattern search algorithm. Although the lower bound cannot be achieved for Wigner D-functions using this method, the mutual coherence of our proposed pattern is still superior to the representative regular sampling patterns.} 
\item {Our phase transition diagrams in Section \ref{Sect:Experiments} suggest that our proposed sampling pattern not only outperforms the representative regular patterns but also matches perfectly random sampling patterns. We demonstrate the benefit of our sampling pattern in some potential applications. These applications include spherical near-field antenna measurements as well as the reconstruction of the earth's magnetic field. We show that, for both cases the required number of samples can be significantly reduced.}
\end{itemize}

The codes used in this paper are available below:
\begin{center}
 {\fontfamily{lmss}\selectfont \url{github.com/bangunarya/}samplingsphere}
\end{center}

\subsection{Notation}
The vectors are denoted by bold small-cap letters. Define $\N\defeq\{1,2,\dots\}$ and $\N_0\defeq \N\cup\{0\}$. Throughout the paper, $a\lesssim b$ means that there is a universal constant $C$ such that $a\leq C b$. Similar convention is used for $a\gtrsim b$. $f(\mathbf x)$ for a function $f:\mathbb R\to\mathbb R$ is the element-wise application of $f$ to the vector $\mathbf x$. $\overline{x}$ is the conjugate of $x$.

\section{Definitions and Backgrounds}\label{Sect:definition}
In this section, we introduce briefly the preliminaries of signal processing over the sphere and the rotation group as well as the problem formulation. The central problem of this work is the recovery of band-limited functions defined on the sphere and the rotation group. We need, therefore, to introduce Fourier analysis for these spaces of functions. 

\subsection{Spherical Harmonics and Wigner D-functions}
Consider the Hilbert space of square-integrable functions $f(\cdot)$ on the sphere ${\S^2}$ denoted by $L^2(\S^2)$. Each element of $\S^2$ is represented by two numbers $\theta\in[0,\pi]$ and $\phi\in[0,2\pi)$. The variables $\theta$ and $\phi$ are called the elevation and the azimuth. 
 The inner product of $f,g\in L^2(\S^2)$ is defined by
\[
\inner{f}g\defeq\int_{\S^2}f(\theta,\phi)\overline{g(\theta,\phi)}\d\nu(\theta,\phi),
\]
where $\d\nu(\theta,\phi)\defeq\sin \theta \d\theta \d\phi$ is the uniform measure on the sphere.
Spherical harmonics are basis functions for the space of functions in $L^2(\S^2)$.  Denoted by $\Y lk(\theta,\phi)$ for degree $l\in\N_0$ and order $k\in\{-l,\dots,l\}$,  they are defined over the sphere ${\S^2}$ as follows:
\begin{equation} \label{SH}
\Y lk (\theta,\phi) \defeq  N^k_l P_l^{k}(\cos \theta) e^{\i k\phi},
\end{equation}
where $P_{l}^k (\cos \theta)$ is the associated Legendre polynomials defined by
\[
P_l^k(x)\defeq \frac{(-1)^k}{2^ll!}(1-x^2)^{k/2}\frac{\d^{k+l}}{\d x^{k+l}}(x^2-1)^l.
\]
The term $N^k_l\defeq\sqrt{\frac{2l+1}{4\pi} \frac{(l-k)!}{(l+k)!}}$ is a normalization factor. It ensures that the function $\Y lk$ has unit $L_2$-norm.
Spherical harmonics are orthonormal with respect to the uniform measure on the sphere  $\d\nu=\sin \theta \d\theta \d\phi$, i.e.,
\begin{equation}
\begin{aligned}
& \int_{0}^{2\pi}\int_{0}^{\pi} \mathrm Y_l^{k}(\theta,\phi) \overline{\mathrm Y_{l'}^{k'} (\theta,\phi)} \sin \theta \mathrm{d}\theta \mathrm{d}\phi = \delta_{ll'} \delta_{kk'}
\end{aligned}
\end{equation}
where $\delta_{ll'}$ is the Kronecker delta. The function $\overline{\Y lk}$ is the conjugate of $\Y lk$ and satisfies:
\[
\overline{\Y lk(\theta,\phi)}=(-1)^k\Y{l}{-k}(\theta,\phi).
\]
For any function $f\in L^2(\S^2)$, the unique expansion
\begin{equation}
f(\theta,\phi)=\sum_{l=0}^{\infty}\sum_{k=-l}^{l} \hat{f}_l^{k} \,\mathrm{Y}_l^{k}(\theta,\phi),
 \label{eq:L2funcS2}
\end{equation} 
where
\begin{equation}
\hat{f}_l^{k}=\int_{0}^{2\pi}\int_{0}^{\pi} f(\theta,\phi) \,\overline{\mathrm{Y}_l^{k}(\theta,\phi)} \sin\theta \mathrm{d}\theta \mathrm{d}\phi.
\end{equation} 
is called the ${\S^2}$-Fourier expansion of $f$ with Fourier coefficients $\hat{f}_l^{k}$. 

The space of all rotations of the sphere $\S^2$ is a group called the rotation group and is denoted by $\SO(3)$. Each element of $\SO(3)$ can be represented by  three rotation angles $\phi\in[0,2\pi)$, $\theta\in[0,\pi]$, and $\chi\in[0,2\pi)$. In this work, we call the angle $\chi$ the polarization. The Hilbert space of square integrable functions on $\SO(3)$, denoted by $L^2(\SO(3))$, is endowed with an inner product, which is defined for two functions $f,g\in\SO(3)$ by
\[
\inner{f}g\defeq \int_{\SO(3)} f(\theta,\phi,\chi)\overline{g(\theta,\phi,\chi)}\d\nu(\theta,\phi,\chi),
\] 
where $\d\nu(\theta,\phi,\chi)\defeq\sin \theta \d\theta \d\phi\d\chi$.  Wigner D-functions are an orthonormal basis for the Hilbert space $L^2(\SO(3))$. Denoted by $\D l{k}{n}(\theta,\phi,\chi)$ with degree $l\in\N_0$ and orders $k,n\in\{-l,\dots,l\}$, they are defined by 
\begin{equation}
\D l{k}{n}(\theta,\phi,\chi)= N_l e^{-\i k\phi} \mathrm{d}_l^{k,n}(\cos \theta)  e^{-\i n\chi} 
 \label{def:WigD}
\end{equation}
where $N_l=\sqrt{\frac{2l+1}{8\pi^2}}$ is the normalization factor to guarantee that Wigner D-functions are unit norm.
The function $\Wd l{k}{n}(\cos \theta)$ is the Wigner d-function of oder $l$ and degrees $k,n$ defined by:
\begin{equation}
\Wd l{k}{n}(\cos \theta)= \omega \sqrt{\gamma} \sin^{\xi} \bigg(\frac{\theta}{2}\bigg)\cos^{\lambda}\bigg(\frac{\theta}{2}\bigg) P_{\alpha}^{(\xi,\lambda)}(\cos \theta)
\end{equation}
where $\gamma=\frac{\alpha!(\alpha + \xi + \lambda)!}{(\alpha+\xi)!(\alpha+\lambda)!}$, $\xi=\card{k-n}$, $\lambda=\card{k+n}$, $\alpha=l-\big(\frac{\xi+\lambda}{2}\big)$ and 
\begin{equation*}
 \omega= \begin{cases} 
        1  & \text{if } n\geq k  \\
        (-1)^{n-k} & \text{if } n<k 
          \end{cases}.
\end{equation*}
{The function $P_{\alpha}^{(\xi,\lambda)}$ is the Jacobi polynomial defined by
 \begin{align*}
P_{\alpha}^{(\xi,\lambda)}(x) &\defeq \frac{(-1)^{\alpha}}{2^{\alpha} \alpha !}(1-x)^{-\xi} (1+x)^{-\lambda} \times \frac{\d^{\alpha}}{\d x^{\alpha}}\left( (1-x)^{\xi}(1+x)^{\lambda}(1-x^2)^{\alpha} \right).
\end{align*}}
The orthonormal property of Wigner D-functions writes as:
\begin{equation}
\begin{aligned}
 \int_{0}^{2\pi}\int_{0}^{2\pi}\int_{0}^{\pi}& \mathrm D_l^{k,n}(\theta,\phi,\chi) \overline{\mathrm D_{l'}^{k',n'} (\theta,\phi,\chi)} \sin \theta \mathrm{d}\theta \mathrm{d}\phi \mathrm{d}\chi =\delta_{ll'} \delta_{kk'} \delta_{nn'}.
\end{aligned}
\end{equation}
The conjugate of $\D lkn$ satisfies \cite[eq. 7.134]{kennedy_hilbert_2013}
\[
\overline{\D lkn(\theta,\phi,\chi)}=(-1)^{k-n}\D l{-k}{-n}(\theta,\phi,\chi).
\]

The $\SO(3)$-Fourier expansion of the function $g \in L^2(\SO(3))$ is defined by  
\begin{equation}
g(\theta,\phi,\chi)=\sum_{l=0}^{\infty}\sum_{k=-l}^{l}\sum_{n=-l}^{l} \hat{g}_l^{k,n} \,\D l{k}{n}(\theta,\phi,\chi),
\label{wfexp}
\end{equation} 
with Fourier coefficients $\hat{g}_l^{k,n}$ are obtained by
\begin{equation}
\hat{g}_l^{k,n}=\int_{0}^{2\pi}\int_{0}^{2\pi}\int_{0}^{\pi} g(\theta,\phi,\chi) \,\overline{\D l{k}{n}(\theta,\phi,\chi)} \sin\theta \d\theta \d\phi \d\chi.
\end{equation} 
An interested reader can refer to the book \cite{wigner2012group} 
for more information on Wigner D-functions and $\SO(3)$.

\begin{remark}
If the order $n$ is set to zero, we get spherical harmonics.  
The Wigner D-functions $\D l{k}{0}$ for $n=0$ are related to spherical harmonics $\Y{l}k$ as 
\begin{equation}\label{gener_SH}
\D l{-k}{0}(\theta,\phi,0) = (-1)^k \sqrt{\frac{1}{2\pi}}\Y {l}k(\theta,\phi).
\end{equation}
 
\end{remark}
\subsection{Sparse Expansions of Band-limited Functions}
In this work, we are interested in band-limited functions inside $L^2(\S^2)$. A function $f \in L^2(\S^2)$ is band-limited with bandwidth $B$ if it is expressed in terms of spherical harmonics of degree less than $B$:
\[
 f(\theta,\phi)=\sum_{l=0}^{B-1}\sum_{k=-l}^{l} \hat{f}_l^{k} \,\Y l{k}(\theta,\phi).
\]
The space of band-limited functions with the degree less than $B$ is a subspace of $L^2(\S^2)$ of dimension $N=B^2$. Every band-limited function $f$, therefore, is fully determined by the vector of $N$  Fourier coefficients $\vec{f}=(\hat{f}_l^{k})_{0\leq l<B}$.

We can define similarly the notion of band-limited functions on $\SO(3)$. A function $g \in L^2(\SO(3))$ is band-limited with bandwidth $B$ if it is expressed in terms of Wigner D-functions of degree less than $B$:
\[
 g(\theta,\phi,\chi)=\sum_{l=0}^{B-1}\sum_{k=-l}^{l}\sum_{n=-l}^{l} \hat{g}_l^{k,n} \,\D l{k}{n}(\theta,\phi,\chi).
\]
The space of band-limited functions with the degree less than $B$ is a subspace of $L^2(\SO(3))$ of dimension $N=\frac{B(2B-1)(2B+1)}{3}$ where each function is completely determined by the vector of Fourier coefficients, $\mathbf{g}=(\hat{g}_l^{k,n})_{0\leq l< B}$.

A band-limited function, whether in  $L^2(\S^2)$ or in $L^2(\SO(3))$, is said to be $s$-sparse if {the vector of its Fourier coefficient $\vec{x}$, i.e.,  $\vec{x} = \vec{f}$ or $\vec{x} = \vec{g}$,} has at most $s$ non-zero entries. This is stated in terms of the $\ell_0$-norm\footnote{
The
 $\ell_0$-norm of a vector $\vec{x}\in\C^n$ is defined by:
\[
\norm{\vec{x}}_0\defeq \sum_{i=1}^n\mrm{1}(x_i\neq 0),
\]
where $\mrm{1}(\cdot)$ is the identity function. Needless to say that $\ell_0$-norm is called a norm just as a convention. It is, indeed, not a norm.
}
as $\norm{\vec{x}}_0\leq s$. For the general non-sparse vector of coefficients $\vec{x}$, either in $L^2(\S^2)$ or in $L^2(\SO(3))$, the best $s$-sparse approximation error of $\vec x$ is defined by:
\[
\sigma_s(\vec{x})_p=\min_{\vec{z}\in\C^N:\norm{\vec z}_0\leq s}\norm{\vec{z}-\vec{x}}_p.
\]
In many applications, the signals are approximately sparse or compressible, that is, the $s$-sparse approximation error decreases rapidly as $s$ increases.

\subsection{Linear Inverse Problems and the $\ell_1$-minimization}

Consider a band-limited function either in $\S^2$ or $\SO(3)$. The function belongs to a finite-dimensional vector space and can be represented by its Fourier coefficients. It is therefore enough to find the Fourier coefficients, a finite-dimensional vector, to specify the function. 

We want to find the Fourier coefficients of a band-limited function from noisy linear samples of the function using as few samples as possible. We focus on $\SO(3)$, which contains $\S^2$ as a special case. Consider a function $g\in L^2(\SO(3))$. We obtain $m$ noisy samples $y_p$ of the function $g$ at points $(\theta_p,\phi_p,\chi_p)$ for $p\in[m]$. The samples are given by:
\begin{align*}
 y_p&=g(\theta_p,\phi_p,\chi_p)+\eta_p\\
 &=\sum_{l=0}^{B-1}\sum_{k=-l}^{l}\sum_{n=-l}^{l} \hat{g}_l^{k,n} \,\D l{k}{n}(\theta_p,\phi_p,\chi_p)+\eta_p,
\end{align*}
where $\eta_p$ is the additive noise with $\card{\eta_p}\leq \epsilon$. The noisy samples are therefore linearly related to the coefficients $\mathbf{g}=(\hat{g}_l^{k,n})_{0\leq l< B}$ as follows:
\begin{equation}
\mbf y=\mbf A\mbf g+\bld\eta, 
\label{eq:LIP}
\end{equation}
where the sample  and the noise vectors are given by:
\[
\vec{y}=\begin{pmatrix}
  y_1\\
  \vdots\\
  y_m
\end{pmatrix},
\vec{\eta}=\begin{pmatrix}
  \eta_1\\
  \vdots\\
  \eta_m
\end{pmatrix}.
\]
The noise vector satisfies $\norm{\bld\eta}_\infty\leq \epsilon$. The matrix $\mat{A}$, called the measurement or sensing matrix, is given by:
\begin{equation}
\mat{A}=
\begin{pmatrix}
  \D 0{0}{0}(\theta_1,\phi_1,\chi_1)&\dots& \D {B-1}{B-1}{B-1}(\theta_1,\phi_1,\chi_1) \\
  \vdots\\
  \D 0{0}{0}(\theta_m,\phi_m,\chi_m)&\dots&\D {B-1}{B-1}{B-1}(\theta_m,\phi_m,\chi_m)
\end{pmatrix}.
\label{eq:sensingmatrix}
\end{equation}
{The columns of $\mat A$ consist of} $m$ different samples of Wigner D-functions, and its rows are comprised of a single sample of all Wigner D-functions of degree less than $B$. The ordering of Wigner D-functions in a row is arbitrary. The only caveat is that the vector $\mbf g\in\C^N$ of $N$ coefficients should be similarly ordered. For simplicity, we assumed that the degree and orders of the Wigner D-function in the column $q\in[N]$ are determined by three functions $l(q)$, $k(q)$ and $n(q)$. In this way, the Wigner D-function of the column $q$ is $\D {l(q)}{k(q)}{n(q)}$. The entry $q$ of $\vec{g}$ is $\hat{g}_{l{(q)}}^{k{(q)},n(q)}$, and the matrix $\mbf A$ is written as 
\begin{equation}
\mbf A=[{A}_{p,q}]_{p\in[m],q\in[N]}:\quad {A}_{p,q}=\mathrm{D}_{l{(q)}}^{k{(q)},n{(q)}}(\theta_p,\phi_p,\chi_p).
\label{eq:sensMat}
\end{equation}

The linear inverse problem is similarly defined for spherical harmonics by removing the polarization parameter from the above equation. In both cases, we are interested in finding the Fourier coefficients from a few samples.

If the vector of coefficients $\vec{g}$, or $\vec{f}$, are sparse or compressible, there are many algorithms for finding the coefficients from a number of samples $m$ {that is smaller than the dimension $N$}. 
In this paper, we use quadratically constrained basis pursuit, i.e., $\ell_1$-minimization problem to solve the problem \eqref{eq:LIP}. The focus, however, is more on different sampling patterns and their effectiveness for signal recovery. The quadratically constrained basis pursuit is defined below:
\[
\vec{g}^{\#}=\arg\min_{\vec{z}\in\C^N}\norm{\vec{z}}_1\st \norm{\mat{A}\vec{z}-\vec{y}}_2\leq \sqrt{m}\epsilon. \tag{QCBP}
\label{eq:QCBP}
\]
In the next sections, we consider various sampling patterns and their recovery guarantees.

\section{Sparse Recovery Guarantees for Random Matrices}
\label{Sect:RIP}

How should the sensing matrix $\mat A$ be chosen for the program \eqref{eq:QCBP} to find a \textit{good} approximation of compressible  coefficients vectors? The error of a good approximation is only bounded by the model and measurement inaccuracies determined by the $s$-sparse approximation error and the noise strength. Therefore, we are interested in choosing $\mat A$ such that any $s$-sparse vector can be perfectly recovered from noiseless linear measurements. This is shown to be possible in compressed sensing literature if the samples are taken randomly from a class of distributions. In most of these results, the proof amounts to showing 
\acl{RIP}, a sufficient condition for signal recovery, for the sensing matrix $\mat{A}$. The \ac{RIP} is defined below.  
\begin{definition}
A matrix $\mat A {\in \C^{m \times N}}$ satisfies the restricted isometry property of order $s$ with constant $\delta\in(0,1)$, if the following inequalities hold for all $s$-sparse vectors $\mathbf x {\in \C^N}$
$$
(1-\delta)\norm{\mathbf x}_2^2\leq \norm{\mathbf{Ax}}_2^2\leq (1+\delta)\norm{\mathbf x}_2^2. 
$$
The smallest number $\delta$, denoted by $\delta_s$, is called the restricted isometry constant of $\mathbf A$. 
\end{definition}
 
Fortunately, a general result for \acp{BOS} is available. The result is used later, and we present it for the paper to be self-contained.
 
\begin{theorem}[\ac{RIP} for \ac{BOS} {\cite[Theorem 12.31]{foucart2013mathematical}}]\label{thm:RIP_BOS}
Consider a set of bounded orthonormal  basis  {$\psi_q:\mathcal{D}\to\mathbb{C} ,q \in [N]$} that are orthonormal with respect to a probability measure $\nu$ on the measurable space $\mathcal{D}$. Consider the matrix $ \boldsymbol{\psi} \in \mathbb{C}^{m \times N}$ with entries
\begin{equation*}
{\psi}_{p,q} = \psi_q(t_p),\,\, p \in [m]\,\,,q \in [N]  
\end{equation*}
constructed with i.i.d. samples $t_p$ from the measure $\nu$. Suppose that $\sup_{q\in [N]}\norm{\psi_q}_\infty\leq {K}$. If 
\begin{equation*}
m \gtrsim \, \delta^{-2}\, {K}^2 \,s\, \log^3(s) \,\log(N)
\end{equation*}
then with probability at least $1-N^{-\gamma log^3(s)}$, the restricted isometry constant $\delta_s$ of $\frac{1}{\sqrt{m}} \psi$ satisfies $\delta_s \leq \delta$ for $\delta\in(0,1)$. The constants $C,\gamma \geq 0$ are universal.
\end{theorem}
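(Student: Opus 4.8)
\emph{Proof plan.} The plan is the standard two-step route for establishing the restricted isometry property of structured random matrices: first control the expected restricted isometry constant, and then show that it concentrates around its mean. As a preliminary reduction, orthonormality of $\{\psi_q\}_{q\in[N]}$ with respect to $\nu$ gives $\E\bigl[\tfrac1m\boldsymbol\psi^{*}\boldsymbol\psi\bigr]=I_N$, and, writing $\boldsymbol\psi_S$ for the column submatrix of $\boldsymbol\psi$ indexed by $S\subseteq[N]$, one has
\begin{equation*}
\delta_s \;=\; \sup_{\substack{S\subseteq[N]\\ |S|\le s}}\;\norm{\tfrac1m\boldsymbol\psi_S^{*}\boldsymbol\psi_S-I_S}_{2\to2} \;=\; \sup_{x\in D_{s,N}}\;\Biggl|\,\frac1m\sum_{p=1}^{m}\Bigl(\bigl|(\boldsymbol\psi x)_p\bigr|^{2}-\norm{x}_2^{2}\Bigr)\Biggr| ,
\end{equation*}
where $D_{s,N}\defeq\{x\in\C^{N}:\norm{x}_0\le s,\ \norm{x}_2\le1\}$ and $(\boldsymbol\psi x)_p$ denotes the $p$-th coordinate of $\boldsymbol\psi x$. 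Since the rows of $\boldsymbol\psi$ are i.i.d., the right-hand side is the supremum of a centered empirical process driven by $t_1,\dots,t_m$, which is the object both steps act on.

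\emph{Step 1 (expectation).} First I would symmetrize with an independent Rademacher sequence $(\varepsilon_p)_{p=1}^{m}$, which yields
\begin{equation*}
\E\delta_s \;\le\; 2\,\E\;\sup_{x\in D_{s,N}}\;\Biggl|\,\frac1m\sum_{p=1}^{m}\varepsilon_p\,\bigl|(\boldsymbol\psi x)_p\bigr|^{2}\Biggr| .
\end{equation*}
Conditioning on $t_1,\dots,t_m$, this is (twice) the expected supremum of a Rademacher chaos indexed by $D_{s,N}$, which I would bound by the Dudley-type entropy estimate for chaos processes: an entropy integral of $D_{s,N}$ taken simultaneously with respect to the $\ell_2$ metric and the data-dependent pseudometric $d_\infty(x,y)=\max_{p\in[m]}\bigl|(\boldsymbol\psi(x-y))_p\bigr|$. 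This in turn calls for two covering-number estimates for $D_{s,N}$: the volumetric bound $\log\mathcal N(D_{s,N},\norm{\cdot}_2,u)\lesssim s\log(eN/(su))$, valid at all scales, and a Maurey-type empirical-method bound for the $d_\infty$-ball, into which the hypothesis $\sup_q\norm{\psi_q}_\infty\le K$ enters through $\bigl|(\boldsymbol\psi x)_p\bigr|\le K\norm{x}_1\le K\sqrt{s}$ for $x\in D_{s,N}$. Splitting Dudley's integral between the two regimes and then integrating out $t_1,\dots,t_m$ produces a bound of the form $\E\delta_s\lesssim\sqrt{\kappa}+\kappa$, where $\kappa$ is at most an absolute constant times $K^{2}s\log^{3}(s)\log(N)/m$ (the power of $\log s$ can be reduced by more refined arguments, but this is immaterial here), so the hypothesis $m\gtrsim\delta^{-2}K^{2}s\log^{3}(s)\log(N)$ with a sufficiently large absolute constant forces $\E\delta_s\le\delta/2$.

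\emph{Step 2 (concentration).} Restricting to $\norm{x}_2=1$, which leaves the supremum unchanged, I would apply a Talagrand/Bousquet-type deviation inequality for the supremum of the centered empirical process defining $\delta_s$. Its two parameters are the uniform bound $\bigl|(\boldsymbol\psi x)_p\bigr|^{2}\le K^{2}s$ on each summand, and the weak variance $\sigma^{2}\defeq\sup_{x}\tfrac1{m^{2}}\sum_{p=1}^{m}\bigl(\E\bigl|(\boldsymbol\psi x)_p\bigr|^{4}-1\bigr)$, which satisfies $\sigma^{2}\le K^{2}s/m$ since $\bigl|(\boldsymbol\psi x)_p\bigr|^{2}\le K^{2}s$ and $\E\bigl|(\boldsymbol\psi x)_p\bigr|^{2}=1$. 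The merit of this inequality, as opposed to a matrix Bernstein bound followed by a union bound over the $\binom{N}{s}$ supports, is precisely that it avoids that union-bound factor; it gives
\begin{equation*}
\P\bigl(\delta_s\ge\E\delta_s+t\bigr)\;\le\;\exp\!\Bigl(-\frac{c\,m\,t^{2}}{K^{2}s\,(1+\E\delta_s)+K^{2}s\,t}\Bigr),\qquad 0<t\le1 .
\end{equation*}
Taking $t=\delta/2$ and inserting $\E\delta_s\le\delta/2$ and $m\gtrsim\delta^{-2}K^{2}s\log^{3}(s)\log(N)$ makes the exponent at most $-\gamma\log^{3}(s)\log(N)$, so the right-hand side is at most $N^{-\gamma\log^{3}(s)}$; on the complementary event $\delta_s\le\E\delta_s+\delta/2\le\delta$, which is the claimed bound. (The extra $\log(s)$ in the hypothesis, beyond what Step~1 requires, is exactly what promotes the failure probability to $N^{-\gamma\log^{3}(s)}$.)

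\emph{Main obstacle.} The symmetrization and the concentration step are, given the standard machinery, essentially bookkeeping. The genuinely delicate point is Step~1, and inside it the extraction of the right polylogarithmic factor: Dudley's entropy integral must be split so that the coarse scales are handled by the $\ell_2$ covering number of $D_{s,N}$ (which is what produces the single power of $\log(N)$) while the fine scales are handled by Maurey's method in the $d_\infty$ pseudometric (which produces the powers of $\log(s)$ and the factor $K^{2}$), and one must then pass carefully from the conditional chaos bound back to the unconditional expectation over $t_1,\dots,t_m$. This is precisely where the boundedness constant $K$ of the system is used, and where careless estimates lose logarithmic factors; for Fourier-type systems, whether the surplus logarithmic factors can be removed altogether is a well-known open problem.
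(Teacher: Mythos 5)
The paper does not prove this statement at all: it is imported verbatim as background, with the proof deferred to the cited reference (Foucart--Rauhut, Theorem 12.31). Your sketch is a correct outline of exactly that standard argument --- symmetrization, a Dudley-type entropy bound combining the volumetric $\ell_2$ covering estimate with Maurey's empirical-method bound in the data-dependent sup-norm pseudometric (which is where $K$ and the $\log$ factors enter), and then Bousquet/Talagrand concentration --- so it matches the approach of the source the paper relies on, up to the minor point that the symmetrized quantity is a Rademacher process over quadratic forms (handled \`a la Rudelson--Vershynin) rather than a genuine Rademacher chaos.
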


The crucial assumption, as we will see later, is the uniform boundedness of $\psi_q(\cdot)$. 
Once the \ac{RIP} property is satisfied by a matrix, $s$-sparse vectors are recovered perfectly using the program \eqref{eq:QCBP}. \ac{RIP} property, indeed, implies the robust and stable null space property which is the necessary and sufficient condition for unique recovery (see \cite[Chapter 12]{foucart2013mathematical}). The following theorem summarizes this result.

\begin{theorem}[Sparse Recovery for \ac{RIP} Matrices {\cite[Corollary 12.34]{foucart2013mathematical}}]\label{thm:RIP_recovery}
Suppose that the matrix $ \boldsymbol{\psi} \in \mathbb{C}^{m \times N}$ has restricted isometry constant $\delta_{2s}\leq 0.4931$. Suppose that the measurements are noisy $\mathbf y=\boldsymbol{\psi}\mathbf x+\boldsymbol{\eta}$ with $\norm{\boldsymbol{\eta}}_\infty\leq \epsilon$. If $\mathbf x^\#$ is the minimizer of  
\begin{equation*}
\vec{x}^{\#}=\arg\min \norm{\mathbf z}_1\text{ subject to } \norm{\mathbf y-\boldsymbol{\psi}\mathbf x}_2\leq \epsilon,
\end{equation*}
then 
\[
\norm{\mathbf x-\mathbf x^\#}_2\lesssim C\paran{\frac{\sigma_s(\mathbf x)_1}{\sqrt{s}}+\epsilon},
\]
 where $C$ depends only on $\delta_{2s}.$
 Without noise, we have $\mathbf x=\mathbf x^\#$ for $s$-sparse vectors $\mathbf x$.
\end{theorem}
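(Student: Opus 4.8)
The plan is to prove this in the two standard stages used to pass from the \ac{RIP} to stable and robust sparse recovery: first, I would show that $\delta_{2s}\le 0.4931$ forces $\boldsymbol{\psi}$ to satisfy an $\ell_2$-robust null space property of order $s$; then I would show that this null space property, together with the feasibility and $\ell_1$-minimality of $\mathbf{x}^\#$, yields the stated error bound. A preliminary normalization: since the hypothesis is $\norm{\boldsymbol{\eta}}_\infty\le\epsilon$, one has $\norm{\boldsymbol{\eta}}_2\le\sqrt m\,\epsilon$, which is exactly the constraint level appearing in \eqref{eq:QCBP}; I will run the whole argument with a generic $\ell_2$ noise bound $\norm{\boldsymbol{\eta}}_2\le\epsilon$ and reinstate the $\sqrt m$ at the very end.

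\emph{Step 1 (\ac{RIP} $\Rightarrow$ robust null space property).} Fix $S\subseteq[N]$ with $\card{S}\le s$ and $\mathbf{v}\in\C^N$; the target is an inequality $\norm{\mathbf{v}_S}_2\le \rho\, s^{-1/2}\norm{\mathbf{v}_{S^c}}_1+\tau\norm{\boldsymbol{\psi}\mathbf{v}}_2$ with $\rho<1$ and $\rho,\tau$ depending only on $\delta_{2s}$. I would sort the coordinates of $\mathbf{v}_{S^c}$ by decreasing modulus and cut $S^c$ into consecutive blocks $S_1,S_2,\dots$ of size $s$, so that each coordinate of $S_{j+1}$ is at most the average modulus on $S_j$; this gives the tail estimate $\sum_{j\ge 2}\norm{\mathbf{v}_{S_j}}_2\le s^{-1/2}\norm{\mathbf{v}_{S^c}}_1$. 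Next, expanding
\[
\norm{\boldsymbol{\psi}\mathbf{v}_{S\cup S_1}}_2^2=\inner{\boldsymbol{\psi}\mathbf{v}_{S\cup S_1}}{\boldsymbol{\psi}\mathbf{v}}-\sum_{j\ge 2}\Big(\inner{\boldsymbol{\psi}\mathbf{v}_{S}}{\boldsymbol{\psi}\mathbf{v}_{S_j}}+\inner{\boldsymbol{\psi}\mathbf{v}_{S_1}}{\boldsymbol{\psi}\mathbf{v}_{S_j}}\Big),
\]
I would bound the left-hand side below by $(1-\delta_{2s})\norm{\mathbf{v}_{S\cup S_1}}_2^2$ via the \ac{RIP}, bound the first term on the right by $\sqrt{1+\delta_{2s}}\,\norm{\mathbf{v}_{S\cup S_1}}_2\,\norm{\boldsymbol{\psi}\mathbf{v}}_2$, and bound each cross term with the \ac{RIP} inner-product estimate $\card{\inner{\boldsymbol{\psi}\mathbf{u}}{\boldsymbol{\psi}\mathbf{w}}}\le\delta_{2s}\norm{\mathbf{u}}_2\norm{\mathbf{w}}_2$, valid for disjointly supported $\mathbf{u},\mathbf{w}$ with $\card{\mathrm{supp}\,\mathbf{u}}+\card{\mathrm{supp}\,\mathbf{w}}\le 2s$ — splitting $\mathbf{v}_{S\cup S_1}$ into $\mathbf{v}_S$ and $\mathbf{v}_{S_1}$ beforehand is precisely what keeps only $\delta_{2s}$ (and not $\delta_{3s}$) in play. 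Dividing by $\norm{\mathbf{v}_{S\cup S_1}}_2$ and using $\norm{\mathbf{v}_S}_2\le\norm{\mathbf{v}_{S\cup S_1}}_2$ produces the desired inequality with $\rho,\tau$ explicit in $\delta_{2s}$; the quoted threshold $0.4931$ is one value obtained by optimizing this bookkeeping so that $\rho<1$ (a cruder version of the same argument already gives, e.g., $\sqrt2-1$).

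\emph{Step 2 (robust null space property $\Rightarrow$ error bound).} It is a general fact that any matrix with the $\ell_2$-robust null space property of order $s$ obeys
\[
\norm{\mathbf{z}-\mathbf{x}}_2\le \frac{C}{\sqrt s}\paran{\norm{\mathbf{z}}_1-\norm{\mathbf{x}}_1+2\sigma_s(\mathbf{x})_1}+D\,\norm{\boldsymbol{\psi}(\mathbf{z}-\mathbf{x})}_2\qquad\text{for all }\mathbf{z},\mathbf{x}\in\C^N,
\]
with $C,D$ depending only on $\rho,\tau$; this follows from one more round of the same sorting/blocking argument, comparing $\norm{\mathbf{z}-\mathbf{x}}_2$ with $\norm{(\mathbf{z}-\mathbf{x})_S}_2$ plus the sorted tail on $S^c$ (now $S$ a best $s$-term support of $\mathbf{x}$) and absorbing $\norm{(\mathbf{z}-\mathbf{x})_S}_2$ using $\rho<1$. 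I would then set $\mathbf{z}=\mathbf{x}^\#$: since $\mathbf{x}$ is feasible for \eqref{eq:QCBP} — its residual is exactly $\boldsymbol{\eta}$ and $\norm{\boldsymbol{\eta}}_2\le\epsilon$ — $\ell_1$-minimality forces $\norm{\mathbf{x}^\#}_1-\norm{\mathbf{x}}_1\le 0$, while the triangle inequality gives $\norm{\boldsymbol{\psi}(\mathbf{x}^\#-\mathbf{x})}_2\le\norm{\boldsymbol{\psi}\mathbf{x}^\#-\mathbf{y}}_2+\norm{\mathbf{y}-\boldsymbol{\psi}\mathbf{x}}_2\le 2\epsilon$. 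Hence $\norm{\mathbf{x}-\mathbf{x}^\#}_2\lesssim C\big(\sigma_s(\mathbf{x})_1/\sqrt s+\epsilon\big)$; reinstating the $\sqrt m$ from the normalization yields the statement as used with \eqref{eq:QCBP}, and taking $\epsilon=0$ makes the right-hand side vanish for $s$-sparse $\mathbf{x}$, i.e.\ exact recovery.

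The only genuinely delicate point is the bookkeeping in Step 1: the cross-term estimates must be organized so that only $\delta_{2s}$ enters (hence the splitting of the head block), and then the resulting inequality — and, in sharper treatments, the block sizes themselves — must be optimized to make the admissible range of $\delta_{2s}$ as large as the quoted threshold. The reduction in Step 2 and the passage from \eqref{eq:QCBP}'s noise constraint to an $\ell_2$ bound are routine.
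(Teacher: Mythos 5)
This theorem is imported by the paper directly from Foucart--Rauhut (Corollary 12.34) without any proof of its own, and your two-stage sketch --- RIP with small $\delta_{2s}$ implies the $\ell_2$-robust null space property via the sort-and-block tail estimate with the head block split into $S$ and $S_1$ so that only $\delta_{2s}$ enters, followed by the generic null-space-property error bound applied to the feasible $\ell_1$-minimizer $\mathbf{x}^\#$ --- is precisely the argument of that reference and is correct in outline. The only caveats are ones you already flag: the crude bookkeeping only certifies $\delta_{2s}<\sqrt{2}-1$, so the quoted threshold $0.4931$ needs the sharper optimization carried out in the source, and the $\sqrt{m}$ factor arising from $\norm{\boldsymbol{\eta}}_\infty\leq\epsilon$ versus an $\ell_2$ constraint must be reinstated exactly as you do, which is consistent with how the paper uses the result in \eqref{eq:QCBP} and Theorem \ref{thm:theorem_wigner}.
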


Unfortunately the results of Theorem \ref{thm:RIP_BOS} and \ref{thm:RIP_recovery} provide only weak bounds for spherical harmonics and Wigner D-functions, because these orthonormal functions are not uniformly bounded, as mentioned in \cite{rauhut2011sparse}. More precisely, see that: 
\begin{equation}
{\Y l0(0,\phi)} =\sqrt{\frac{2l+1}{4\pi}}.
\end{equation}
The value of ${\Y l0(0,\phi)}$ can be shown to be the upper bound on all spherical harmonics of degree $l$. This means that all spherical harmonics of degree less than $B$ are bounded by $\sqrt{\frac{2B-1}{4\pi}}$, and the bound is tight. Since the ambient dimension $N$ is equal to $B^2$, the uniform upper bound $K$ on spherical harmonics depends on $N$ as $K=O(\sqrt{B})$. Theorem \ref{thm:RIP_BOS}, then, yields a bound on $m$ that depends on the ambient dimension as $O(\sqrt{N})$. A more general dependence of this type appeared in the paper \cite{burq2012weighted}. This dependence {might yield vacuous bounds on the measurement numbers} for large dimensions and very sparse vectors. 

Rauhut and Ward in \cite{rauhut2011sparse} and Burq et al. in \cite{burq2012weighted} used a preconditioning technique that improves this dependence for spherical harmonics.  At the core of the preconditioning technique lies the following inequality:
\begin{equation}
\card{(\sin^2\theta\cos \theta)^{1/6} \, Y_l^k(\theta,\phi)} \lesssim (l+1)^{1/6}
\end{equation} 
Burq et al. \cite{burq2012weighted} change the probability measure defined on $\S^2$ to the measure $\mathrm d\nu=|\tan \theta|^{1/3}\mathrm  d\theta \mathrm d\phi$ and preconditioned the spherical harmonics by $(\sin^2\theta\cos \theta)^{1/6}$. Note that further normalization by a constant is needed to turn the new measure to a probability measure. The new probability measure, however, improves the dependence of $m$ on $N$ to $O(N^{1/6})$, which improves also the previous precondtioning by $(\sin \theta)^{1/2}$ proposed in \cite{rauhut2011sparse}.

The upper bound on Wigner D-functions, similarly, depends on $N$. In particular, see from the equality \ref{gener_SH}, that the upper bound $K$ is also $O(\sqrt{B})$. Since  $N$ is related to $B$ by $N =\frac{B(2B-1)(2B+1)}{3}$, the measurement number $m$ should depend on $N$ as $O(N^{1/3})$. We propose a similar preconditioning technique to improve this bound. The following inequality is crucial for our derivations:
\[
 \card{(\sin \theta)^{1/2} \mathrm d_l^{k,n}(\cos \theta)} \lesssim (2l +1)^{-1/4}.
\]
 We prove the above inequality in the appendix. This inequality suggests that the upper bound is improved if we precondition $\D lkn$ by $(\sin\theta)^{1/2}$. The preconditioning technique can be applied with Theorem \ref{thm:RIP_BOS} and Theorem \ref{thm:RIP_recovery} to yield the recovery guarantee for random sampling patterns, stated in the following theorem. 

 \begin{theorem}
 \label{thm:theorem_wigner}
Consider the problem \eqref{eq:LIP} of finding Fourier coefficients $\vec{g}$ of a band-limited function $g\in L^2(\SO(3))$ from noisy linear measurements $\mathbf y=\mathbf A\mathbf g+\boldsymbol{\eta}$ with $\norm{\vec{\eta}}_{\infty} \leq \epsilon$. 

Suppose that the sensing matrix $\mat{A}$ is constructed as \eqref{eq:sensingmatrix} using $m$ i.i.d. samples $(\theta_p,\phi_p,\chi_p)$, $p\in[m]$ drawn uniformly from $[0,\pi]\times[0,2\pi]\times[0,2\pi]$. 
 Let $\mathbf{P}$ be a diagonal matrix with {each diagonal element $P_{ii}= \sin(\theta_i)^{1/2}$ for $i \in [m]$}. The number of measurements $m$ is assumed to satisfy the following inequality 
\begin{equation*}
m \gtrsim \, {N}^{1/6} \,s\, \log^3(s) \,\log(N).
\end{equation*}
Then with  probability at least $1-N^{-\gamma log^3(s)}$, the following holds. If $\mathbf{g}^\#$ is the  solution to the following problem
\begin{equation*}
 \mathbf{g}^\# = \arg\min \left\Vert \mathbf{z} \right\Vert_1 \textnormal{subject to} \left\Vert \mathbf{P}\mathbf{A} \mathbf{z}- \mathbf{P}\mathbf{y}\right\Vert_2 \leq \sqrt{m}\epsilon. 
\end{equation*}
then, 
\begin{equation*}
 \left\Vert \mathbf {g} -\mathbf{g}^{\#}\right\Vert_2 \lesssim \frac{ \sigma_s (\mathbf {g})_1}{\sqrt{s}} + \epsilon.
\end{equation*}
In particular, when the measurements are not noisy, the recovery is unique for $s$-sparse signals, namely $\mathbf {g}=\mathbf{g}^{\#}$.
\end{theorem}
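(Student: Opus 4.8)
The plan is to recognize the preconditioned matrix $\mathbf{P}\mathbf{A}$ as a constant multiple of a bounded orthonormal system (BOS) matrix and then invoke Theorem~\ref{thm:RIP_BOS} followed by Theorem~\ref{thm:RIP_recovery}. First I would change the reference measure: drawing $(\theta_p,\phi_p,\chi_p)$ uniformly on $[0,\pi]\times[0,2\pi]\times[0,2\pi]$ corresponds to the probability measure $\d\tilde\nu=\frac{1}{4\pi^3}\,\d\theta\,\d\phi\,\d\chi$. For each column index $q\in[N]$ define $\psi_q(\theta,\phi,\chi):=2\pi^{3/2}(\sin\theta)^{1/2}\,\D{l(q)}{k(q)}{n(q)}(\theta,\phi,\chi)$. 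Using the orthonormality of the Wigner D-functions with respect to $\sin\theta\,\d\theta\,\d\phi\,\d\chi$, one checks directly that $\{\psi_q\}_{q\in[N]}$ is an orthonormal system with respect to $\tilde\nu$, and that the associated BOS matrix $\boldsymbol\psi$ with entries $\psi_q(\theta_p,\phi_p,\chi_p)$ equals $2\pi^{3/2}\,\mathbf{P}\mathbf{A}$.

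Second I would bound $K:=\sup_{q\in[N]}\|\psi_q\|_\infty$. From the definitions, $|\psi_q|=2\pi^{3/2}N_{l(q)}\,\bigl|(\sin\theta)^{1/2}\Wd{l(q)}{k(q)}{n(q)}(\cos\theta)\bigr|$; combining $N_l=\sqrt{(2l+1)/(8\pi^2)}$ with the inequality $\bigl|(\sin\theta)^{1/2}\Wd lkn(\cos\theta)\bigr|\lesssim (2l+1)^{-1/4}$ (proven in the appendix) gives $|\psi_q|\lesssim (2l(q)+1)^{1/4}\lesssim B^{1/4}$. Since $N=\frac{B(2B-1)(2B+1)}{3}\asymp B^3$, this yields $K\lesssim N^{1/12}$, hence $K^2\lesssim N^{1/6}$. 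This step carries essentially all the analytic content and is the main obstacle; the delicate part is the Jacobi-polynomial estimate behind the displayed bound on $\Wd lkn$, which is why it is deferred to the appendix. Everything else is bookkeeping within the BOS framework.

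Third I would apply Theorem~\ref{thm:RIP_BOS} to $\boldsymbol\psi$: whenever $m\gtrsim \delta^{-2}K^2 s\log^3(s)\log(N)$, which is implied by $m\gtrsim N^{1/6} s\log^3(s)\log(N)$ after fixing $\delta$, the matrix $\tfrac{1}{\sqrt m}\boldsymbol\psi=\tfrac{2\pi^{3/2}}{\sqrt m}\mathbf{P}\mathbf{A}$ has restricted isometry constant $\delta_{2s}\le\delta$ with probability at least $1-N^{-\gamma\log^3(s)}$; choosing $\delta\le 0.4931$ puts us in the regime of Theorem~\ref{thm:RIP_recovery}. Finally I would transfer this to the recovery statement. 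Since $\sin\theta_p\le 1$ we have $\|\mathbf{P}\boldsymbol\eta\|_2\le\|\boldsymbol\eta\|_2\le\sqrt m\,\epsilon$, so $\mathbf g$ is feasible for the program in the theorem; moreover, multiplying the matrix, the data vector, and the noise level all by the harmless positive constant $2\pi^{3/2}$ does not change the minimizer, so that program coincides with a quadratically constrained basis pursuit for $\tfrac{1}{\sqrt m}\boldsymbol\psi$ with an $\ell_2$ noise level $\lesssim\epsilon$. Theorem~\ref{thm:RIP_recovery} then gives $\|\mathbf g-\mathbf g^{\#}\|_2\lesssim \sigma_s(\mathbf g)_1/\sqrt s+\epsilon$, the absolute constants absorbing the $2\pi^{3/2}$ factor, and exact recovery of $s$-sparse $\mathbf g$ when $\epsilon=0$. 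I would close by remarking that the spherical-harmonics case follows verbatim as the specialization $n\equiv 0$ (with the known preconditioner $(\sin^2\theta\cos\theta)^{1/6}$ of \cite{burq2012weighted} in place of $(\sin\theta)^{1/2}$), recovering the $N^{1/6}$ scaling there as well.
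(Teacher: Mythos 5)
Your proposal is correct and follows essentially the same route as the paper's proof: precondition with $(\sin\theta)^{1/2}$, use the Jacobi-polynomial bound on Wigner d-functions to get $K\lesssim (2l+1)^{1/4}\lesssim N^{1/12}$, and then invoke Theorem~\ref{thm:RIP_BOS} and Theorem~\ref{thm:RIP_recovery} for the BOS matrix. Your explicit normalization of the uniform probability measure (the factor $2\pi^{3/2}$) and the feasibility check $\norm{\mat P\bld\eta}_2\leq\sqrt m\,\epsilon$ are details the paper glosses over, so your write-up is, if anything, slightly more careful on the bookkeeping.
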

\begin{proof}
  The proof is given in Appendix \ref{proof:thm:theorem_wigner}.
\end{proof}

\begin{remark}
{
The recovery guarantee proofs for many algorithms are based on \ac{RIP}. Therefore our \ac{RIP} proof implies recovery guarantee for algorithms like iterative hard thresholding, hard thresholding pursuit and  orthogonal matching pursuit (see \cite[Remark 12.35]{foucart2013mathematical}).
}
\end{remark}

\begin{remark}
{
The role of preconditioning matrix is to counter the increase of Wigner D-functions at the endpoints of the interval. As we discussed above, there is a more general result based on pre-conditioning given in \cite{burq2012weighted}. Their results applies to the functions that are canonical solutions to Laplacian defined over a compact $n$-dimensional Riemannian manifold. Spherical harmonics and Wigner D-functions belong to this class of functions. It has been shown that the first $N$ canonical solutions, called eigenfunctions, defined on a compact $n$-dimensional Riemannian manifold are uniformly bounded by $N^{n-1/2n}$ \cite[Corollary 2]{burq2012weighted}. 
For $\SO(3)$, a 3-dimensional compact manifold, this approach yields the bound $N^{1/3}$ which is worse than the result above. As stated in \cite{burq2012weighted}, this bound deteriorates as the dimension of underlying manifold increases. There is another more powerful  result  in \cite{rauhut2012sparse,burq2012weighted} with better scaling with $N$. This result applies to functions defined over surfaces of revolution. However, Wigner D-functions are not defined for surfaces of revolution, and therefore these results do not apply. In the numerical results, we also consider the performance of preconditioning and measure in \cite{burq2012weighted}. It is, however, not clear at the moment how a similar bound can be obtained for Wigner D-functions.
}
\end{remark}

\section{Coherence Analysis of Sensing Matrices for Regular Sampling Patterns} \label{Sect:coherence}
Theorem \ref{thm:theorem_wigner} guarantees that random samples are suitable for sparse recovery of sparse Wigner D-expansion, while a similar result for spherical harmonics was given in \cite{rauhut2011sparse}. Practitioners use, however, more deterministic and regular samples. For instance, the samples in antenna design applications  are taken through robotic probes, which have physical limitations for taking too close measurements.  Therefore sampling patterns that are sufficiently distant and lead to smoother probe movements are preferred. In practice, the sampling points  are chosen from some known structures like equiangular sampling patterns. The main challenge is to find suitable regular patterns for sparse recovery.

Verifying \ac{RIP} for deterministic sensing matrices is computationally hard. Furthermore, except the single example of \cite{bourgain_explicit_2011}, only randomly generated sensing matrices have been shown so far to satisfy \ac{RIP}. 
There are, however, examples of matrices that do not satisfy \ac{RIP} and yet provide provable recovery guarantees \cite{dirksen_gap_2016}. 
Therefore, instead of using \ac{RIP}, we choose another notion to assess whether a sensing matrix is suitable for solving inverse problems. There are other concepts for evaluating the \textit{goodness} of sensing matrices, such as spark or mutual coherence of a matrix. {The mutual coherence has been used  to construct deterministic sensing matrices. For Fourier basis, the authors in  \cite{xia2005achieving} used tools from combinatorial number theory, in this case difference sets, to construct deterministic partial Fourier matrices for specific choice of input dimension $N$ and measurement numbers $m$. This construction was shown to achieve the Welch bound. When the input dimension $N$ is prime and  with specific $m \leq N$, the authors in \cite{haupt2010restricted} developed a method to produce deterministic Fourier matrices that can recover sparse signals with dimension $s \leq \frac{\sqrt{\epsilon}}{32} \left(\frac{\sqrt{\epsilon \log 2}}{2}\right)^{\exp{\left (\frac{4}{\epsilon} \right)}}
m^{\frac{1-\epsilon}{2}}$ for $\epsilon \in (0,1)$.
The authors in \cite{xu2014compressed} proposed a construction that can recover sparse signal with sparsity dimension $s < \frac{\sqrt{m}}{2(N-1)} + 0.5$ by using \ac{BP}.
} 

{In contrast to Fourier sensing matrices, there are only limited works related to the construction of deterministic sensing matrices from spherical harmonics and Wigner D-functions for compressed sensing. For instance, spiral sampling points are used to construct such sensing matrices, as investigated in \cite{alem2012sparse,cornelius2015analysis,hofmann2019minimum}, and perform numerical comparison of success recovery several compressed sensing algorithms. To the best of our knowledge, this paper is the first work to discuss the coherence bounds for those matrices.}
\begin{definition}
The mutual coherence of a matrix $\mbf A=[\vec{a}_1 \dots \vec{a}_N]\in\mbb C^{m\times N}$ is defined as the maximum of the normalized inner product of columns of the matrix, i.e.,  
\[
 \mu
 (\mbf{A})\defeq\max_{1\leq i<j\leq N}\frac{\card{\langle\vec{a}_i,\vec{a}_j\rangle}}{\norm{\vec{a}_i}_2  \norm{\vec{a}_j}_2
 }.
\]
\end{definition}
The mutual coherence belongs to the interval $[0,1]$. As a rule of thumb, the coherence of the sensing matrix should be very small for recovery of moderately sparse vectors. It is possible to obtain recovery guarantees for deterministic sensing matrices using its coherence value (for example see  \cite[Theorem 5.7]{foucart2013mathematical}). These results, however, yield bounds on the number of measurements that scale quadratically with the sparsity level. This is underwhelming even for moderate sparsity regime. Nevertheless, the coherence can still be used as a good indication for fitness of a sensing matrix, which is the approach we opt in this article. 

The mutual coherence expression for spherical harmonics, $\mu_1(\mbf A)$, and Wigner D-functions, $\mu_2(\mbf A)$, are given by
\begin{equation} \label{coherence_SH}
\begin{aligned}
\mu_1(\mat A)\defeq& &  \underset{ 1 \leq r < q \leq N}{\text{max}}
 \,\,\card{\sum_{p=1}^{m}\frac{ \mathrm{Y}_{l{(q)}}^{k{(q)}}(\theta_p,\phi_p)  \overline{\mathrm{Y}_{l{(r)}}^{k{(r)}}(\theta_p,\phi_p)}}{\norm{\mathrm{Y}_{l{(q)}}^{k{(q)}}(\boldsymbol\theta,\boldsymbol\phi)}_2 \norm{\mathrm{Y}_{l{(r)}}^{k{(r)}}(\boldsymbol\theta,\boldsymbol\phi)}_2}}
\end{aligned}
\end{equation}

\begin{equation} \label{coherence_Wigner}
\begin{aligned}
&\mu_2(\mat{A})\defeq  \underset{ 1 \leq r < q \leq N}{\text{max}}
 \,\,\card{\sum_{p=1}^{m}\frac{ \mathrm{D}_{l{(q)}}^{k{(q)},n{(q)}}(\theta_p,\phi_p,\chi_p)  \overline{\mathrm{D}_{l{(r)}}^{k{(r)},n{(r)}}(\theta_p,\phi_p,\chi_p)}}{\norm{\mathrm{D}_{l{(q)}}^{k{(q)},n{(q)}}(\boldsymbol\theta,\boldsymbol\phi,\boldsymbol\chi)}_2 \norm{\mathrm{D}_{l{(r)}}^{k{(r)},n{(r)}}(\boldsymbol\theta,\boldsymbol\phi,\boldsymbol\chi)}_2}},  
\end{aligned}
\end{equation}
where we adopt the following convention:
\[\Y{l}{k}(\vec\theta,\vec\phi)\defeq
\begin{pmatrix}
 \mathrm{Y}_{l}^{k}(\theta_1,\phi_1)\\
 \vdots\\
 \mathrm{Y}_{l}^{k}(\theta_m,\phi_m). 
\end{pmatrix}
\]
and 
\[\mathrm{D}_{l}^{k,n}(\boldsymbol\theta,\boldsymbol\phi,\boldsymbol\chi)\defeq
\begin{pmatrix}
 \mathrm{D}_{l}^{k,n}(\theta_1,\phi_1,\chi_1)\\
 \vdots\\
  \mathrm{D}_{l}^{k,n}(\theta_m,\phi_m,\chi_m)
\end{pmatrix}.
\]

As a reminder, the problem of designing sensing matrix for spherical harmonics and Wigner D-expansion boils down to finding the sequence of azimuth, elevation, and for Wigner D-functions case, polarization over which the measurements are taken. For spherical harmonics, the sampling pattern is given by pairs $(\theta_p,\phi_p)$ with $p\in[m], \theta_p \in [0,\pi]$ and $\phi_p \in [0,2\pi)$. For Wigner D-expansion, a rotation variable should be added and the sampling pattern is given by pairs $(\theta_p,\phi_p,\chi_p)$ with $p\in[m], \theta_p \in [0,\pi]$ and $\phi_p,\chi_p \in [0,2\pi)$. In the next section, our first result states that many sampling patterns, which are widely used in practice, have high mutual coherence and therefore are {inapplicable} for compressed sensing.

\subsection{Modularly Symmetric Patterns over Azimuth and Polarization}

A large class of regular sampling patterns select their sampling patterns on a regular grid over $\theta,\phi$ and $\chi$. 
Some of these sampling patterns, however, would lead to high mutual coherence and therefore should be avoided for compressed sensing applications. Spherical harmonics and Wigner D-functions are defined by associated Legendre polynomials and Jacobi polynomials. These polynomials are linearly related to each other for different orders and degrees. Through this relation, two columns of the sensing matrix can become strongly coherent in some cases.  The following theorem concerns one of these cases. It states the regular sampling on $\phi$ and $\chi$ can lead to full coherence. 
 
\begin{theorem}\label{thm:fullcoherence}
Let the matrix $\mat{A} \in \mathbb{C}^{m \times N}$ be constructed from samples of spherical harmonics $\Y{l}{k}(\theta,\phi)$ or Wigner D-functions $\D{l}{k}{n}(\theta,\phi,\chi)$. {For a signal with bandwidth $B$, suppose that a given sampling pattern for orders $-(B-1)\leq k,n\leq B-1$ satisfies:}
\begin{align}
2k\phi_i&\equiv 2k\phi_j \mod 2\pi,&\forall i,j\in[m]\label{eq:mod_angle_SH}\\
2n\chi_i+2k\phi_i&\equiv 2n\chi_j +2k\phi_j\mod 2\pi,&\forall i,j\in[m]\label{eq:mod_angle_WD}
\end{align}
{respectively for spherical harmonics and Wigner D-functions.} Then the mutual coherence of this matrix attains its maximum, i.e., $\mu(\mat{A}) = 1$.
\end{theorem}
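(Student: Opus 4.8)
The plan is to exhibit, under each of the two congruence hypotheses, a pair of \emph{distinct} columns of $\mat{A}$ that are scalar multiples of one another. This suffices: if $\vec a=c\,\vec b$ with $c\neq0$ and $\vec b\neq\vec 0$ (the latter being implicit in the very definition of $\mu(\mat{A})$, which divides by the column norms), then $\card{\inner{\vec a}{\vec b}}=\card{c}\,\norm{\vec b}_2^2=\norm{\vec a}_2\norm{\vec b}_2$, so the term for this pair in the maximum defining the mutual coherence equals $1$; since $\mu(\mat{A})\le1$ always, we conclude $\mu(\mat{A})=1$. The pair will be the one obtained by flipping the sign of the order(s): $(l,k)$ versus $(l,-k)$ in the spherical harmonic case, and $(l,k,n)$ versus $(l,-k,-n)$ in the Wigner case.

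For spherical harmonics, fix any degree $l$ and order $k$ with $1\le k\le B-1$ (possible since $B\ge2$), so that the columns indexed by $(l,k)$ and $(l,-k)$ are distinct. Starting from the conjugation identity $\overline{\Y{l}{k}(\theta,\phi)}=(-1)^k\Y{l}{-k}(\theta,\phi)$ stated above and using that $N_l^k$ and $P_l^{k}(\cos\theta)$ are real, one reads off from \eqref{SH} that $\Y{l}{-k}(\theta_p,\phi_p)=(-1)^k e^{-2\i k\phi_p}\,\Y{l}{k}(\theta_p,\phi_p)$ for every $p$. Hypothesis \eqref{eq:mod_angle_SH} says exactly that $e^{2\i k\phi_p}$ is the same for all $p$, hence so is the prefactor $(-1)^k e^{-2\i k\phi_p}$; therefore the two columns are proportional and the argument of the first paragraph gives $\mu_1(\mat{A})=1$.

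For Wigner D-functions, fix any degree $l$ and any pair $(k,n)$ with $-(B-1)\le k,n\le B-1$ and $(k,n)\neq(0,0)$ (again available for $B\ge2$), so the columns $(l,k,n)$ and $(l,-k,-n)$ are distinct. Using the conjugation identity $\overline{\D{l}{k}{n}(\theta,\phi,\chi)}=(-1)^{k-n}\D{l}{-k}{-n}(\theta,\phi,\chi)$ recorded above together with the reality of $N_l$ and $\Wd{l}{k}{n}(\cos\theta)$, one gets from \eqref{def:WigD} that $\D{l}{-k}{-n}(\theta_p,\phi_p,\chi_p)=(-1)^{k-n}e^{2\i(k\phi_p+n\chi_p)}\,\D{l}{k}{n}(\theta_p,\phi_p,\chi_p)$ for every $p$. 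Hypothesis \eqref{eq:mod_angle_WD} makes $e^{2\i(k\phi_p+n\chi_p)}$, and hence the whole prefactor, independent of $p$, so these columns are proportional and $\mu_2(\mat{A})=1$.

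I do not expect a genuine obstacle: the result is a short consequence of the order-reflection symmetry of spherical harmonics and Wigner D-functions, which is already embodied in the conjugation identities appearing earlier in the paper. The only points worth stating carefully are the reality of the $\theta$-dependent factors $P_l^k(\cos\theta)$ and $\Wd{l}{k}{n}(\cos\theta)$ (immediate from their defining formulas), the observation that the proportionality constant has unit modulus and is in particular nonzero (so the Cauchy--Schwarz equality genuinely yields coherence $1$ rather than the indeterminate $0/0$), and the trivial requirement that the two selected columns be distinct, i.e. $k\neq0$ for spherical harmonics or $(k,n)\neq(0,0)$ for Wigner D-functions, which needs only $B\ge2$.
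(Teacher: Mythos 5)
Your proposal is correct and follows essentially the same route as the paper's own proof: both exploit the order-reflection/conjugation symmetry $\Y{l}{-k}=(-1)^k e^{-\i 2k\phi}\Y{l}{k}$ (and its Wigner D-function analogue) to show that under the stated congruences two distinct columns become scalar multiples of each other, forcing $\mu(\mat{A})=1$. Your added care about column distinctness ($k\neq 0$, resp.\ $(k,n)\neq(0,0)$) and the unit-modulus proportionality constant only makes explicit what the paper leaves implicit.
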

\begin{proof}
Associated Legendre polynomials satisfy a symmetry relation over order in the following sense  \cite{lohofer1998inequalities}:
 \begin{equation}
P_l^{-k}(\cos \theta) = (-1)^k C_{lk} P_l^k(\cos \theta)
\end{equation}
where {$C_{lk} = \frac{(l-k)!}{(l+k)!}$}. This relation implies immediately a symmetric relation over orders of spherical harmonics, namely
 \begin{equation}
  \Y l{-k}(\theta,\phi)=(-1)^k\overline{\Y l{k}(\theta,\phi)}=(-1)^k{\Y l{k}(\theta,\phi)}e^{-\i 2k\phi}.
  \label{eq:symmerty-SH}
 \end{equation}
 Now if the azimuth sampling points are selected as $2k\phi_i\equiv 2k\phi_j \mod 2\pi$ for all $i,j\in[m]$, then the equality $e^{-\i 2k\phi_i}=e^{-\i 2k\phi_j}$ holds, which implies:
\[
\Y l{-k}(\boldsymbol\theta,\boldsymbol\phi)=C_k\Y l{k}(\boldsymbol\theta,\bld\phi) 
\]

for some constant $C_k$. This means that there are two columns of the matrix, corresponding to these two basis functions, totally coherent with each other and therefore yielding the coherence equal to one. On the other hand, it can be easily seen that by inverting the sign of orders of Wigner D-functions, the orders of respective Jacobi polynomial does not change and therefore:
\begin{equation}
\mathrm{d}_{l}^{k,n}(\cos \theta) =(-1)^{n-k}\mathrm{d}_{l}^{-k,-n}(\cos \theta).
\label{eq:symmetry-Wd}
\end{equation}
which means that 
\begin{equation*}
\begin{aligned}
\D{l}{k}{n}(\theta,\phi,\chi)& = (-1)^{n-k}\overline{\D{l}{-k}{-n}(\theta,\phi,\chi)}\\ &=(-1)^{n-k}\D{l}{-k}{-n}(\theta,\phi,\chi)e^{-j2k\phi}e^{-j2n\chi}.
\end{aligned}
\end{equation*}
If for some $k,n$, we have $2n\chi_i+2k\phi_i\equiv 2n\chi_j +2k\phi_j\mod 2\pi$ for all $i,j\in[m]$, then similar to spherical harmonics, it holds that:
\[
\D{l}{k}{n}(\bld\theta,\bld\phi,\bld\chi)=(-1)^{n-k}\D{l}{-k}{-n}(\bld\theta,\bld\phi,\bld\chi).
\]

And therefore there are two columns that are completely coherent and therefore the mutual coherence is equal to one.
\end{proof}
The previous theorem precludes some of familiar sampling patterns. One notable example is equiangular sampling on $\phi$ namely,  $\phi_p = \frac{2\pi(p-1)}{m-1}$ for $p \in [m]$. If the number of samples are odd and smaller than $2B-1$, the sensing matrix has the coherence equal to one with columns corresponding to $k=\frac{m-1}2$ being completely coherent. 
For Wigner D-functions, the equiangular samples on the azimuth $\phi$ and polarization $\chi$ are not proper sampling patterns. Note that in Wigner D-functions case, it is possible to end up with full coherence even if the polarization and azimuth angles are chosen irregularly. 

Theorem \ref{thm:fullcoherence} provides a first step to understand what to avoid  in sensing matrix designs. In the next sections, we first provide an alternative way of characterizing coherence using tools originally developed in quantum mechanics. Afterwards, instead of imposing regularity on $\phi$ and $\chi$, we study regular sampling on the elevation $\theta$. 
\subsection{Coherence Analysis using Wigner 3j Symbols}
\label{sec:clebschgordancoherence}
Spherical harmonics and Wigner D-functions express wave functions in the study of angular momentum in quantum mechanics. Their products appear in the characterization of total angular momenta of a composite system in terms of the angular momentum of its two sub-systems. This characterization involves a decomposition of the wave function into two wave functions with different angular momenta. The coefficients of this decomposition are given by the Clebsch-Gordan coefficients, also known as Wigner or vector coupling coefficients, as well as Wigner 3j symbols \cite{griffiths_introduction_2014,rose1995elementary,edmonds_angular_2016,wigner2012group,biedenharn_angular_1981}. We focus on the latter and provide briefly some of the useful identities here.
Wigner 3j symbols are  denoted by {$ \begin{pmatrix}
   l_1 & l_2 & l_3 \\
   k_1 & k_2 & k_3
  \end{pmatrix} \in \R$}, and their exact formula is given in \cite[Section 7.10.2]{kennedy_hilbert_2013} or \cite{edmonds_angular_2016}. {In quantum mechanics, $l_i$'s and $k_i$'s are non-negative integers or half-odd numbers, however in this paper, we only focus on the case where they are all integers.} Despite their complex expressions, Wigner 3j symbols have a few useful properties. The so-called \textit{selection rules} state that Wigner 3j symbols $\begin{pmatrix}
   l_1 & l_2 & {l}_3 \\
   k_1 & k_3 & k_3
  \end{pmatrix}$
are non-zero only if:
\begin{itemize}
\item The absolute value of $k_i$ does not exceed $l_i$, i.e., $-l_i \leq k_i \leq l_i$ for $i=1,2,3$
\item The summation of all $k_i$ should be zero: $k_1 + k_2 +k_3 = 0$.
\item Triangle inequality holds for $l_i$'s: $\card{l_1 - l_2} \leq l_3 \leq l_1 + l_2$.
\item The sum of all $l_i$'s should be an integer.
\item If $k_1 = k_2 = k_3 = 0$,  $l_1+l_2+l_3$ should be an even integer.
\end{itemize}
If one of the above conditions does not hold, the corresponding Wigner 3j symbol will be zero.
In coherence analysis of the sensing matrix in  \eqref{coherence_Wigner} and \eqref{coherence_SH}, one encounters sums over products of spherical harmonics or Wigner D-functions. We can use Wigner 3j symbols to express {these sums in terms of sums of spherical harmonics}, or respectively Wigner D-functions. The decomposition reveals in another way the effect of sampling patterns on the mutual coherence. The following proposition, derived from the decomposition based on Wigner 3j symbols, characterizes the inner product between two columns of the sensing matrix.
\begin{proposition} \label{prop:CG_innerproduct}
Let $\mathrm{D}_{l }^{k ,n }(\theta ,\phi ,\chi )$ be the Wigner D-function with degree $l$ and orders $k,n$, and let $\Y{l}{k}(\theta,\phi)$ be the spherical harmonics with degree $l$ and order $k$. Then the following identities hold:
\begin{equation}
\small
\begin{aligned}
 &\sum_{p=1}^m\overline{\mathrm{D}_{l_1}^{k_1,n_1}(\theta_p,\phi_p,\chi_p)}  \mathrm{D}_{l_2}^{k_2,n_2}(\theta_p,\phi_p,\chi_p)  \\
 &= C_{k_2,n_2}\sum_{\substack{\hat l=|l_2-l_1| }}^{l_1 + l_2} {\sqrt{\frac{(2l_1 + 1)(2l_2 +1)(2\hat{l} + 1)}{8 \pi^2}}} \\& \times \begin{pmatrix}
   l_1 & l_2 & \hat{l} \\
   -n_1 & n_2 & -\hat{n} 
  \end{pmatrix} \begin{pmatrix}
   l_1 & l_2 & \hat{l} \\
   -k_1 & k_2 & -\hat{k}
  \end{pmatrix}  \paran{\sum_{p=1}^m\mathrm{D}_{\hat{l}}^{\hat{k},\hat{n}}(\theta_p,\phi_p,\chi_p)},
\end{aligned}
\end{equation}
\begin{equation}
\small
\begin{aligned}
  &\sum_{p=1}^m\overline{\mathrm{Y}_{l_1}^{k_1}(\theta_p,\phi_p)}  \mathrm{Y}_{l_2 }^{k_2}(\theta_p,\phi_p)=(-1)^{k_1}\mathrm{Y}_{l_1}^{-k_1}(\theta_p,\phi_p)  \mathrm{Y}_{l_2}^{k_2}(\theta_p,\phi_p)\\
  &=(-1)^{k_2}\sum_{\hat{l}=\card{l_1 - l_2} }^{l_1+l_2}\sqrt{\frac{(2l_1+1)(2l_2+1)(2\hat l+1)}{4\pi}} \\& \times \begin{pmatrix}
   l_1 & l_2 & \hat{l} \\
   0  & 0  & 0 
  \end{pmatrix} \begin{pmatrix}
   l_1 & l_2 & \hat{l} \\
   -k_1  & k_2  & -\hat{k}
  \end{pmatrix}\paran{\sum_{p=1}^m\mathrm{Y}_{\hat{l}}^{\hat{k}}(\theta_p,\phi_p)}.
\end{aligned}
\end{equation}
where $\hat{k} = k_2-k_1$ and $\hat{n} = n_2 - n_1$ and the phase factor $ C_{k_2,n_2} = (-1)^{k_2 + n_2}$.
\end{proposition}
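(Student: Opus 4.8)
The plan is to reduce the statement to a single, classical ingredient: the Gaunt-type product formula that expresses the pointwise product of two Wigner D-functions (respectively two spherical harmonics) as a finite linear combination of single Wigner D-functions (resp.\ spherical harmonics), with coefficients given by products of two Wigner 3j symbols. Once that product formula is in hand, the proposition follows simply by summing over the sample points $p\in[m]$ and exchanging the (finite) sum over $p$ with the (finite) sum over the new degree $\hat l$.

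First I would set up the conjugation step. Using the identity $\overline{\D{l_1}{k_1}{n_1}(\theta,\phi,\chi)}=(-1)^{k_1-n_1}\D{l_1}{-k_1}{-n_1}(\theta,\phi,\chi)$ stated in the preliminaries (and $\overline{\Y{l_1}{k_1}}=(-1)^{k_1}\Y{l_1}{-k_1}$ for the spherical harmonic case), I rewrite the summand $\overline{\D{l_1}{k_1}{n_1}}\,\D{l_2}{k_2}{n_2}$ as $(-1)^{k_1-n_1}\D{l_1}{-k_1}{-n_1}\,\D{l_2}{k_2}{n_2}$, i.e.\ a genuine product of two D-functions. Next I invoke the standard coupling/product rule for Wigner D-functions: for integer degrees and orders,
\begin{equation*}
\D{l_1}{m_1}{m_1'}\,\D{l_2}{m_2}{m_2'}
= \sum_{\hat l}\,(2\hat l+1)\,\sqrt{\tfrac{8\pi^2}{2\hat l+1}}\;\begin{pmatrix} l_1 & l_2 & \hat l\\ m_1 & m_2 & -\hat m\end{pmatrix}\begin{pmatrix} l_1 & l_2 & \hat l\\ m_1' & m_2' & -\hat m'\end{pmatrix}\,\D{\hat l}{\hat m}{\hat m'},
\end{equation*}
where $\hat m=m_1+m_2$, $\hat m'=m_1'+m_2'$, and the selection rules force $\hat l$ to range over $|l_1-l_2|\le\hat l\le l_1+l_2$ (this is the triangle inequality bullet in the list of selection rules). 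Substituting $m_1=-k_1,\,m_2=k_2,\,m_1'=-n_1,\,m_2'=n_2$ gives $\hat m=k_2-k_1=\hat k$ and $\hat m'=n_2-n_1=\hat n$, exactly as in the statement; collecting the normalization factors $N_{l_1}N_{l_2}$ from the definition of $\D l k n$ together with the $\sqrt{8\pi^2/(2\hat l+1)}$ and the explicit $N_{\hat l}$ that must reappear to write the answer in terms of the normalized $\D{\hat l}{\hat k}{\hat n}$ produces the claimed prefactor $\sqrt{(2l_1+1)(2l_2+1)(2\hat l+1)/(8\pi^2)}$. Finally, tracking the sign $(-1)^{k_1-n_1}$ from the conjugation against the reindexing $-k_1\mapsto k_1$, $-n_1\mapsto n_1$ inside the 3j symbols (which, using the symmetry $\begin{pmatrix} l_1 & l_2 & \hat l\\ -m_1 & -m_2 & -\hat m\end{pmatrix}=(-1)^{l_1+l_2+\hat l}\begin{pmatrix} l_1 & l_2 & \hat l\\ m_1 & m_2 & \hat m\end{pmatrix}$ and the fact that $l_1+l_2+\hat l\in\N$) collapses to the stated phase $C_{k_2,n_2}=(-1)^{k_2+n_2}$. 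Summing over $p$ and pulling the finite $\hat l$-sum outside completes the D-function identity.

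The spherical-harmonic identity is the $n=0$ specialization, obtained either directly from the Gaunt formula
\begin{equation*}
\Y{l_1}{m_1}\Y{l_2}{m_2}=\sum_{\hat l}\sqrt{\tfrac{(2l_1+1)(2l_2+1)(2\hat l+1)}{4\pi}}\begin{pmatrix} l_1 & l_2 & \hat l\\ 0 & 0 & 0\end{pmatrix}\begin{pmatrix} l_1 & l_2 & \hat l\\ m_1 & m_2 & -\hat m\end{pmatrix}\Y{\hat l}{\hat m},
\end{equation*}
or by feeding the relation $\D l{-k}0(\theta,\phi,0)=(-1)^k\sqrt{1/2\pi}\,\Y l k(\theta,\phi)$ into the D-function formula and checking that the extra $1/2\pi$ factors and the value $\begin{pmatrix} l_1 & l_2 & \hat l\\ 0&0&0\end{pmatrix}$ for the second 3j symbol at $n_1=n_2=0$ recombine correctly; the phase $(-1)^{k_1}$ in the first line is just the conjugation identity for $\Y{}{}$, and the overall $(-1)^{k_2}$ comes out of the same sign bookkeeping as before.

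The only real obstacle is bookkeeping: getting every normalization constant ($N_{l_1},N_{l_2},N_{\hat l}$, the $8\pi^2$ versus $4\pi$, the $\sqrt{2\hat l+1}$) and every phase ($(-1)^{k_1-n_1}$ from conjugation, the 3j reflection sign $(-1)^{l_1+l_2+\hat l}$, and the convention-dependent $\omega$-type signs hidden in $\Wd l k n$) to cancel into the clean constants $\sqrt{(2l_1+1)(2l_2+1)(2\hat l+1)/(8\pi^2)}$ and $C_{k_2,n_2}=(-1)^{k_2+n_2}$ stated in the proposition. I would do this once carefully and relegate it to a short computation; no step is conceptually deep, but a single misplaced sign or $2\pi$ breaks the formula, so the check must be exact rather than up-to-constant. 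Everything else — the range of $\hat l$, the values $\hat k=k_2-k_1$, $\hat n=n_2-n_1$, and the interchange of the two finite sums — is immediate from the selection rules and finiteness.
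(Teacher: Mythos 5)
Your proposal follows essentially the same route as the paper: rewrite $\overline{\mathrm{D}_{l_1}^{k_1,n_1}}\,\mathrm{D}_{l_2}^{k_2,n_2}$ via the conjugation identity, expand the resulting product with the Clebsch--Gordan/Wigner 3j product formula (the paper's Eq.~\eqref{eq:3j-Wigner}, cited from Edmonds), specialize to $n_1=n_2=0$ for the spherical-harmonic case, and sum over the finite sample index $p$. The only cosmetic issue is that your displayed product rule omits the normalization factors $N_{l_1}N_{l_2}$ and the phase $(-1)^{\hat k+\hat n}$ that the paper's version carries explicitly, but since you flag exactly this bookkeeping (which indeed combines with $(-1)^{k_1-n_1}$ to give $C_{k_2,n_2}=(-1)^{k_2+n_2}$ without even needing the 3j reflection symmetry), the argument is correct.
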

\begin{proof}
The product of two Wigner D-functions of degrees $l_1$ and $ l_2$   and orders  ${k_1},{n_1}$ and ${k_2},{n_2}$ writes in terms of 
the Wigner 3j symbols as

\begin{equation}
\begin{aligned}
&\mathrm{D}_{l_1}^{k_1,n_1}(\theta,\phi,\chi) \mathrm{D}_{l_2}^{k_2,n_2} (\theta,\phi,\chi) = \\
&(-1)^{\hat{k}+\hat{n}}  \sum_{\hat{l}=\card{l_1 - l_2}}^{l_1+l_2} {\sqrt{\frac{(2l_1 + 1)(2l_2 +1)(2\hat{l} + 1)}{8 \pi^2}}}
\begin{pmatrix}
   l_1 & l_2 & \hat{l} \\
   k_1 & k_2 & -\hat{k}
  \end{pmatrix}\times \begin{pmatrix} 
   l_1 & l_2 & \hat{l} \\
   n_1 & n_2 & -\hat{n}
  \end{pmatrix}  \mathrm{D}_{\hat{l}}^{\hat{k},\hat{n}}(\theta,\phi,\chi),
\end{aligned}
\label{eq:3j-Wigner}
\end{equation}
where $\hat n=n_1+n_2$ and $\hat k=k_1+k_2$  \cite[pp. 61-62]{edmonds_angular_2016}. The spherical harmonics version of the expansion can be obtained by using $n_1=n_2=0$.  

From the conjugate property of these functions, we know that:

\[
\overline{\mathrm{D}_{l_1}^{k_1,n_1}(\theta,\phi,\chi)}  =  (-1)^{k_1-n_1}\D{l_1}{-k_1}{-n_1}(\theta,\phi,\chi)
\text{ and }\overline{\Y{l_1}{k_1}(\theta,\phi)}=(-1)^{k_1}\Y{l_1}{-k_1}(\theta,\phi).
\] 
The proof follows with standard manipulations by plugging in these identities to \eqref{eq:3j-Wigner}.
\end{proof}

According to Proposition \ref{prop:CG_innerproduct}, the inner product between columns of the sensing matrix depends on the sampling pattern through the sum $\sum_{p=1}^m\mathrm{Y}_{\hat{l}}^{\hat{k}}(\theta_p,\phi_p)$ or ${\sum_{p=1}^m\mathrm{D}_{\hat{l}}^{\hat{k},\hat{n}}(\theta_p,\phi_p,\chi_p)}$. The next theorem uses this characterization when the elevation samples are chosen symmetrically in the following sense. 

\begin{definition}[Cosine-symmetric sampling]
  Cosine-symmetric sampling patterns are defined by a set of $m$ samples $(\theta_p,\phi_p,\chi_p)$ for $p=1,\dots,m$  such that the set $\{\cos\theta_1,\dots,\cos\theta_m\}$ consists of symmetric points around the origin inside $[-1,1]$. 
\end{definition}

\begin{theorem}
\label{thm:cosinesym_orth}
  Suppose that $m$ samples are chosen such that the elevation samples $\theta_1,\dots,\theta_m$ are cosine-symmetric. Consider two columns of the sensing matrix corresponding to samples of two spherical harmonics with equal order $k_1=k_2$ and different degrees $l_1$ and $l_2$. If $l_1+l_2$ is odd, then the columns are orthogonal.  The same conclusion holds for two Wigner D-functions when one pair of orders are equal and the other pair of orders are equal to zero. 
\end{theorem}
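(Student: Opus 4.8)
The plan is to express the inner product between the two columns via Proposition~\ref{prop:CG_innerproduct}, to reduce it to a linear combination of sums of the form $\sum_{p=1}^m P_{\hat l}(\cos\theta_p)$ over Legendre polynomials, and then to kill each surviving term using parity together with cosine-symmetry of the elevation samples.

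First I would treat the spherical harmonics case. Applying Proposition~\ref{prop:CG_innerproduct} to the columns indexed by $\Y{l_1}{k}$ and $\Y{l_2}{k}$, equality of the two orders gives $\hat k = k_2 - k_1 = 0$, so the inner product equals $(-1)^{k}$ times a sum over $\hat l = |l_1-l_2|,\dots,l_1+l_2$ of Wigner-3j coefficients times $\sum_{p=1}^m \Y{\hat l}{0}(\theta_p,\phi_p)$. The decisive observation is that $\Y{\hat l}{0}(\theta,\phi)=N^0_{\hat l}\,P_{\hat l}(\cos\theta)$ carries no azimuthal dependence, so $\sum_{p=1}^m \Y{\hat l}{0}(\theta_p,\phi_p) = N^0_{\hat l}\sum_{p=1}^m P_{\hat l}(\cos\theta_p)$ depends on the sampling pattern only through the multiset $\{\cos\theta_1,\dots,\cos\theta_m\}$.

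Next I would invoke the selection rule for the 3j symbol $\begin{pmatrix} l_1 & l_2 & \hat l \\ 0 & 0 & 0\end{pmatrix}$: it vanishes unless $l_1+l_2+\hat l$ is even, so, since $l_1+l_2$ is odd by hypothesis, only odd values of $\hat l$ contribute. For odd $\hat l$ the Legendre polynomial is odd, $P_{\hat l}(-x) = -P_{\hat l}(x)$, and cosine-symmetry means the multiset $\{\cos\theta_p\}$ is invariant under $x\mapsto -x$; hence $\sum_{p=1}^m P_{\hat l}(\cos\theta_p) = -\sum_{p=1}^m P_{\hat l}(\cos\theta_p) = 0$. Every term of the expansion therefore vanishes and the two columns are orthogonal.

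The Wigner D-function case follows the same route using the first identity of Proposition~\ref{prop:CG_innerproduct}: if one pair of orders is equal and the other pair is zero --- say $k_1 = k_2$ and $n_1 = n_2 = 0$, the complementary sub-case being symmetric --- then $\hat k = \hat n = 0$, the summand function reduces to $\D{\hat l}{0}{0}$, which by the remark relating Wigner D-functions to spherical harmonics is proportional to $\Y{\hat l}{0}$ and hence to $P_{\hat l}(\cos\theta)$ with no $\phi$- or $\chi$-dependence, while one of the two 3j factors has an all-zero bottom row; the same selection-rule-plus-parity argument then yields orthogonality. I do not expect a substantive obstacle here. The one point that needs care is that the azimuth and polarization samples must genuinely drop out of the relevant sums, which is precisely what $\hat k = \hat n = 0$ guarantees --- this would fail if the matched orders were allowed to differ, since then a surviving factor $e^{\i\hat k\phi_p}$ (or $e^{\i\hat n\chi_p}$) would reintroduce a dependence on $\phi_p$ (or $\chi_p$), which is why the hypothesis on the orders is essential.
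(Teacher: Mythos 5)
Your proposal is correct and follows essentially the same route as the paper: decompose the column inner product via Proposition~\ref{prop:CG_innerproduct}, note that the all-zero-bottom-row Wigner 3j symbol kills the terms with $l_1+l_2+\hat l$ odd, and annihilate the remaining (odd $\hat l$) terms by the odd parity of $P_{\hat l}$ under the cosine-symmetric samples; your treatment of the Wigner D-case (reduction to $\D{\hat l}{0}{0}\propto P_{\hat l}(\cos\theta)$ with $\hat k=\hat n=0$) is exactly the "similar argument" the paper leaves implicit.
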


\begin{proof}
We start with spherical harmonics. We use Proposition \ref{prop:CG_innerproduct}. Note that:
\begin{align*}
\D l00(\theta_p,\phi_p,\chi_p)&=\frac{1}{2\pi}\Y l0(\theta_p,\phi_p)=\sqrt{\frac{2l+1}{8\pi^2}} P^0_l(\cos\theta_p)\\
&=\sqrt{\frac{2l+1}{8\pi^2}} P_l(\cos\theta_p),
\end{align*}
where $P_l(\cos\theta)$ is the Legendre polynomial. Legendre polynomials are odd functions for odd $l$. This means that for the cosine-symmetric elevation sampling, when $l$ is odd, it holds that:
\[
\sum_{p=1}^m P_l(\cos\theta_p)=0.
\]
Therefore Proposition \ref{prop:CG_innerproduct} implies that:
\begin{equation}
\small
\begin{aligned}
  &\sum_{p=1}^m\overline{\mathrm{Y}_{l_1}^{k}(\theta_p,\phi_p)}  \mathrm{Y}_{l_2 }^{k}(\theta_p,\phi_p)=\\
  &(-1)^{k}\sum_{\hat{l}=\card{l_1 - l_2},\text{even} }^{l_1+l_2}\sqrt{\frac{(2l_1+1)(2l_2+1)(2\hat l+1)}{4 \pi}}  \times \begin{pmatrix}
   l_1 & l_2 & \hat{l} \\
   0  & 0  & 0 
  \end{pmatrix} \begin{pmatrix}
   l_1 & l_2 & \hat{l} \\
   -k  & k  & 0
  \end{pmatrix}\paran{\sum_{p=1}^m\mathrm{Y}_{\hat{l}}^{0}(\theta_p,\phi_p)}.
\end{aligned}
\end{equation}
On the other hand, according to the selection rules, if $l_1+l_2$ is odd and $\hat{l}$ is even, then $\begin{pmatrix}
   l_1 & l_2 & \hat{l} \\
   0  & 0  & 0 
  \end{pmatrix}=0$, which proves the theorem. A similar argument works for the Wigner D-functions.
\end{proof}

Theorem \ref{thm:cosinesym_orth} implies that, if the elevation sampling pattern is cosine-symmetric, there are at least $\lfloor\frac{B}{2}\rfloor$ columns that are mutually orthogonal. Cosine-symmetric sampling patterns are also regular, hence, suitable for practical measurements. Using this insight, in the next section, we propose a  cosine-symmetric pattern with minimal coherence.


\section{Equispaced Elevation Sampling for Spherical Harmonics and Wigner D-Functions}
\label{Sect:equispacedSampling}

As we discussed, among regular sampling patterns, equiangular sampling patterns on azimuth and polarization lead to coherent, and therefore undesirable, sensing matrices. On the other hand, a class of regular sampling patterns on the elevation yield incoherent measurements as in Theorem \ref{thm:cosinesym_orth}. 

As soon as the elevation sampling is fixed, the mutual coherence is automatically bounded from below regardless of the choice of azimuth sampling patterns. This is because, in the inner products of columns with equal orders $k_1 = k_2 = k$ and $n_1=n_2=n$, the terms $e^{\i k_1\phi_p}$ and $e^{-\i k_2\phi_p}$ and the terms $e^{\i n_1\chi_p}$ and $e^{-\i n_2\chi_p}$ cancel each other out. Furthermore the $\ell_2$-norm of $ \mathrm{Y}_{l}^{k}(\boldsymbol\theta,\boldsymbol\phi)$ and $\mathrm{D}_l^{k,n}(\bld\theta,\bld\phi,\bld\chi)$ depends only on elevation sampling for all degrees and orders. We state this simple result in the following proposition.

\begin{proposition}\label{prop:lowerbound}
  Let the elevation sampling be fixed to $\theta_1,\theta_2,\dots,\theta_m$. For all possible choices of azimuth $\phi_p$, and polarization $\chi_p$, $p\in[m]$, it holds that
  \begin{align*}
\mu_1(\mathbf A) & \geq \max_{\substack{{l\neq r}\\{|k|\leq \min{(l,r)}}}}\frac{\bigl|\sum_{p=1}^{m} P_{l}^{k}(\cos \theta_p)P_{r}^{k }(\cos \theta_p)\bigr|}{\norm{P_{l}^{k}(\cos\bld \theta)}_2 \norm{P_{r}^{k}(\cos \bld\theta)}_2},\\
\mu_2(\mathbf A) & 
\geq \max_{\substack{{l\neq r}\\{|k|,|n|\leq \min{(l,r)}}}}
\frac{\bigl|\sum_{p=1}^{m} \d_{l}^{k,n}(\cos \theta_p)\d_{r}^{k,n}(\cos \theta_p)\bigr|}{\norm{\d_{l}^{k,n}(\cos\bld \theta)}_2 \norm{\d_{r}^{k,n}(\cos \bld\theta)}_2},
\end{align*}
where 
\begin{align*}
  P_{l}^k (\cos \boldsymbol\theta)&\defeq\paran{{P}_{l}^k(\cos \theta_1),\dots,{P}_{l}^k(\cos \theta_m)}^T\\
  \d_{l}^{k,n}(\cos\bld \theta)&\defeq\paran{\d_l^{k,n}(\cos\theta_1),\dots,\d_l^{k,n}(\cos\theta_m)}^T.
\end{align*}
In particular it holds that
\begin{equation*}
\begin{aligned}
\min\bracket{\mu_1(\mat A),\mu_2(\mat A)} \geq \frac{\biggl|\sum_{p=1}^{m} {P}_{B-1}(\cos \theta_p), {P}_{B-3}(\cos \theta_p) \biggr|}{\norm{P_{B-1} (\cos \boldsymbol\theta)}_2 \norm{P_{B-3} (\cos \boldsymbol\theta)}_2}
\end{aligned}
\end{equation*}
where ${P}_l (\cos \theta)$ is the Legendre polynomial of degree $l$ and
\[
 P_{l} (\cos \boldsymbol\theta)\defeq \paran{{P}_{l}(\cos \theta_1),\dots,{P}_{l}(\cos \theta_m)}^T.
\]
\end{proposition}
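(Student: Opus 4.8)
The plan is to prove both displayed inequalities by a single observation: when two columns of the sensing matrix correspond to basis functions with the same azimuthal order $k$ (and, in the Wigner case, the same polarization order $n$), the azimuthal and polarization exponentials in the inner product cancel, so the inner product reduces to a sum purely over the elevation samples. Then since $\mu_1(\mat A)$ and $\mu_2(\mat A)$ are maxima over \emph{all} pairs of columns, they are in particular at least as large as the maximum over this restricted subfamily of pairs.

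First I would fix two indices $q,r$ with $l(q)=l$, $l(r)=r$, $l \neq r$, and $k(q)=k(r)=k$ with $|k| \leq \min(l,r)$ (this last constraint just ensures both spherical harmonics exist). Writing out the numerator of \eqref{coherence_SH} for this pair and substituting the definition \eqref{SH}, the factors $e^{\i k\phi_p}$ and $\overline{e^{\i k\phi_p}} = e^{-\i k\phi_p}$ multiply to $1$ for every $p$, leaving
\[
\sum_{p=1}^m \Y{l}{k}(\theta_p,\phi_p)\overline{\Y{r}{k}(\theta_p,\phi_p)} = N_l^k N_r^k \sum_{p=1}^m P_l^k(\cos\theta_p) P_r^k(\cos\theta_p).
\]
Similarly, the norm $\norm{\Y{l}{k}(\bld\theta,\bld\phi)}_2 = N_l^k \norm{P_l^k(\cos\bld\theta)}_2$ depends only on the elevation samples. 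The normalization constants $N_l^k$, $N_r^k$ appear in both numerator and denominator and cancel, so the normalized inner product for this pair equals exactly the ratio in the first displayed bound. Taking the maximum over admissible $(l,r,k)$ and noting this is a sub-maximum of the one defining $\mu_1(\mat A)$ gives the first inequality. The Wigner D-function case is identical: with $k_1=k_2=k$ and $n_1=n_2=n$, both $e^{-\i k\phi_p}$ and $e^{-\i n\chi_p}$ cancel against their conjugates, the normalization $N_l$ cancels, and one is left with the ratio of sums of products of Wigner d-functions, yielding the second inequality.

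For the final ``in particular'' statement I would specialize: take $k=0$ in both bounds simultaneously, which is admissible for any degrees. For spherical harmonics, $P_l^0 = P_l$ is the Legendre polynomial; for Wigner D-functions, $\d_l^{0,0}(\cos\theta) = P_l(\cos\theta)$ as well (this follows from the explicit Wigner d-function formula with $\xi=\lambda=0$, $\alpha=l$, reducing the Jacobi polynomial $P_l^{(0,0)}$ to the Legendre polynomial, and $\omega=\sqrt\gamma=1$). Hence the $k=0$, $n=0$ term of each lower bound is a ratio built from Legendre polynomials alone, so both $\mu_1$ and $\mu_2$ are bounded below by $\max_{l\neq r}$ of such ratios, and in particular by any single choice of $(l,r)$. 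Choosing $l=B-1$ and $r=B-3$ — both valid degrees for a bandwidth-$B$ signal — gives exactly the claimed bound on $\min\{\mu_1(\mat A),\mu_2(\mat A)\}$.

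There is no serious obstacle here; the only points requiring a little care are (i) checking that the normalization factors genuinely cancel rather than merely simplify, which is immediate since each column norm carries precisely the same $N_l^k$ (or $N_l$) as the corresponding factor in the inner product, and (ii) verifying the identity $\d_l^{0,0}(\cos\theta)=P_l(\cos\theta)$ so that the spherical-harmonic and Wigner lower bounds can be compared on the same footing for the combined ``$\min$'' statement. Both are routine substitutions into the definitions already recorded in Section~\ref{Sect:definition}. One should also remark that the choice $(l,r)=(B-1,B-3)$ is not claimed to be optimal — it is simply a concrete admissible pair that makes the bound explicit; $l_1+l_2=2B-4$ is even, so this term is not killed by the cosine-symmetry argument of Theorem~\ref{thm:cosinesym_orth}, which is exactly why it gives a nontrivial lower bound.
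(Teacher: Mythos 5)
Your proposal is correct and follows essentially the same route as the paper, which proves the bounds precisely by restricting the coherence maximum to column pairs of equal orders (so the azimuth/polarization exponentials and the normalization constants cancel), and then specializes to $k=n=0$ with degrees $B-1$ and $B-3$, using $P_l^0=\d_l^{0,0}=P_l$, for the combined bound on $\min\{\mu_1(\mat A),\mu_2(\mat A)\}$. Your added remarks — that $(B-1,B-3)$ is merely a convenient admissible pair and that the $(B-1,B-2)$ term is excluded because it vanishes under cosine-symmetric sampling by Theorem~\ref{thm:cosinesym_orth} — match the discussion the paper gives immediately after the proposition.
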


The proposition follows by choosing equal orders in the definition of the coherence. Its lower bounds hold in general for any sampling pattern. Note that Theorem \ref{thm:cosinesym_orth} implies that:
\[
\sum_{p=1}^{m} {P}_{B-1}(\cos \theta_p) {P}_{B-2}(\cos \theta_p)=0. 
\]
This is why the lower bound involves only Legendre polynomials of degree $B-1$ and $B-3$.  

On the face of it, Proposition \ref{prop:lowerbound} seems trivial. 
It indicates the sensitivity of mutual coherence to the choice of elevation sampling alone. The lower bound, however, is almost tight for a class of regular sampling patterns on elevation defined below if $m$ is sufficiently large. 

\begin{definition}[Equispaced Elevation Sampling]
\label{def:equispaced}
The equispaced elevation sampling pattern is defined by the elevation samples $\theta_p$ for $p\in[m]$ given by
\[
  \cos \theta_{p} = \frac{2p-m-1}{m-1},
\]
which satisfies \(-1=\cos \theta_{1}  <\cos \theta_2<\hdots< \cos \theta_{m-2} < \cos \theta_{m-1} < \cos \theta_{m}=1 \).
\end{definition}
{
Note that the above sampling points are cosine-symmetric.} 
For the equispaced elevation sampling, for sufficiently large $m$, the dominant inner product among all the inner products between the spherical harmonics of equal orders is the inner product between degrees of $B-1$ and $B-3$. This can be clearly seen in Fig. \ref{fig:SHdegrees}. After a certain measurement number $m$, the inner products between columns of equal orders are completely ordered. The ordering of inner products between two columns, say of degree $l_1$ and $l_2$, corresponds to a partial order defined on the degree pairs $(l_1,l_2)$. This can formally proven. We relegate, however, the detailed derivations of this result to another work \cite{bangun_toappear_2019}.  The lower bound is therefore tight in the following sense. Once the elevation sampling pattern is equispaced, there is a fundamental lower bound on the coherence independent of the choice of azimuth and polarization. This lower bound is given in Proposition \ref{prop:lowerbound} for sufficiently large $m$.  Note that the number of measurements $m$ should be of $O(N^{1/2})$ for the tightness of the lower bound in this sense. This dependence on $N$ is in general undesirable and cannot be removed, as it can be seen in the numerical result. The exact inequality, however, involves large constants, so that, for many $N$'s of practical interest, the number of required measurements for the tightness of the lower bound are small. For example, when $N=1024$, Figure \ref{fig:SHdegrees} show that after 100 measurements, the lower bound becomes tight. In the next section, we provide a way to choose azimuth sampling patterns that achieves the lower bound for spherical harmonics. 
 
\begin{figure}[htb!]
\begin{centering}  
     \scalebox{0.6}{\input{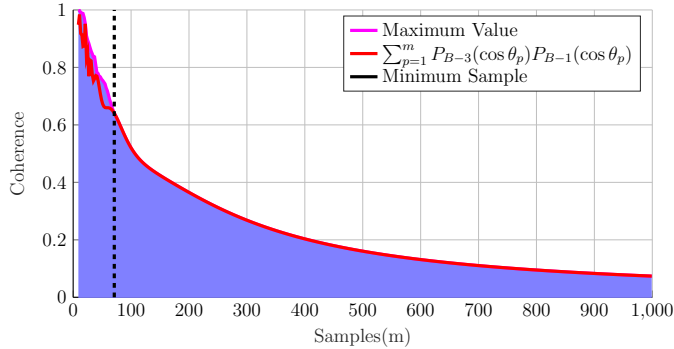}}  
  
\caption{{The inner product of two columns of the sensing matrix for different measurement numbers $m$ for $B=32$ ($N=1024$)}}  
\label{fig:SHdegrees}
\end{centering}  
\end{figure} 

\subsection{Sampling Pattern Design using Coherence Minimization}

Is the lower bound of Proposition \ref{prop:lowerbound} tight for equispaced sampling patterns? That is, can we find an azimuth sampling pattern that achieves the bound? To do so, we directly minimize the mutual coherence as a figure of merit. The problem of minimizing the mutual coherence for spherical harmonics and Wigner D-functions is non-convex in general since Legendre polynomials, Jacobi polynomials and trigonometric polynomials $e^{\i(k^{(r)}-k{(q)})\phi_p}$ are non-convex. We provide, however, a pattern search algorithm for minimizing the mutual coherence \cite{bangun2018}. Pattern search, however, requires less computation time and provides better results in comparison. It is particularly useful as it does not need to calculate the gradient during optimization process. There is, however, no guarantee that the method will converge to the global optimum. See for example  \cite{torczon1997convergence} for a discussion on the convergence of this algorithm. Although the method has rooms for improvements, it still yields, as we will see, sufficiently good sampling patterns.

First consider spherical harmonics.  The algorithm is described in  Algorithm \ref{algo_ps}. It starts by choosing initial $\vec{\phi}_0$
drawn uniformly at random on the interval $[0,2\pi)^m$. The elevation sampling pattern $\vec{\theta}$ is fixed. 

The algorithm has two hyperparameters $\lambda$ and $\Delta_0$. The parameter $\Delta_0$ is the initial update step, and determines the  search space, which is spanned along the canonical bases. The update step is decreased iteratively by the decay parameter $\lambda$. 
The algorithm tries to find the minimum coherence and its minimizer by checking the neighbor vectors where the initial update step is given as $\Delta_0$. The mutual coherence at the iteration $k$ is denoted by $\mu(\vec{\theta},\vec{\phi}_k)$. If the search fails,  the step size is decreased by scaling with $\lambda$. The algorithm stops when the number of iteration is achieved a pre-determined maximum or when the difference between the update coherence and the lower bound of Proposition \ref{prop:lowerbound}, denoted by $\mu_{\mrm{LB}}$, is small $\card{\mu(\vec{\theta},\vec{\phi}_k) - \mu_{\mrm{LB}}} \leq \epsilon$.

 \begin{algorithm}
    \caption{{Pattern search}}\label{algo_ps}
    \begin{algorithmic}
      \scriptsize
      \STATE \textbf{Initialization : } 
       \begin{itemize}
           \item $\boldsymbol \theta $ is given.
           \item $\boldsymbol{\phi}_0 \in \mathbb{R}^m$ as initial points.
           \item $\Delta_0 > 0$ as initial update step.
           \item Standard basis $\boldsymbol e_i$ for $i\in [m] $.
           \item Scaling for update rule $\lambda \in (0,1)$.
           \item Coherence of pair $\boldsymbol \theta, \boldsymbol \phi \in \R^m$ is given as $\mu(\boldsymbol \theta, \boldsymbol \phi)$. 
       \end{itemize}
    \FOR  {$k =0,\dots, k_{\max}$ until $\card{\mu(\vec{\theta},\vec{\phi}_k) - \mu_{\mrm{LB}}} \leq \epsilon$}
    \STATE{Create the set $S_k := \{\boldsymbol \phi_k \pm \Delta_k \boldsymbol e_i: i\in[m] \}$}
    \IF{ there is an $\mbf x \in S_k$ such that $\mu(\boldsymbol \theta,\mbf x) < \mu(\boldsymbol \theta,\boldsymbol \phi_k)$  } 
    \STATE{$\boldsymbol \phi_{k+1} = \mbf x\mod 2\pi$}\STATE{$\Delta_{k+1} = \Delta_{k}$}
    \ELSE
    \STATE{$\boldsymbol \phi_{k+1} = \boldsymbol \phi_{k} \mod 2\pi$}\STATE{$\Delta_{k+1} =  \lambda\Delta_{k}$}
    \ENDIF
      \ENDFOR   
    \end{algorithmic}
  \end{algorithm} 
Figure \ref{Coherence_comparison} compares the mutual coherence of the resulting sampling pattern from Algorithm \ref{algo_ps} with other sampling patterns widely used in applications. We use spiral \cite{saff1997distributing}, Hammersley \cite{cui1997equidistribution}, Fibonacci  \cite{swinbank2006fibonacci} and equiangular sampling patterns {to verify the result of Theorem \ref{thm:fullcoherence}}. {Another sampling pattern on the sphere is the so-called \textit{t-design} \cite{delsarte1977spherical}. Unfortunately, the spherical $t$-design does not exist for an arbitrary pair $m,t$ as given in \cite{womersley2018efficient}, which also restricts the flexibility to choose an arbitrary number of samples. To the best of our knowledge there is nothing related to spherical designs on the rotation group.}

The bandwidth of spherical harmonics is chosen as $B=10$, which yields  $N = B^2 = 100$. We plot also the Welch bound, which is the strict lower bound on the coherence of any $m\times N$ matrix. Figure \ref{Coherence_comparison}, interestingly, shows that the obtained sampling pattern achieves the lower bound of Proposition \ref{prop:lowerbound} and outperforms with a large margin the other sampling patterns. We have numerically observed that the lower bound can be achieved using our sampling patterns for $N$ {up to $10000$}. Figure \ref{Proposed} shows the distribution of this sampling points on the sphere for different number of samples $m$, $B=32$ and $N=B^2=1024$.

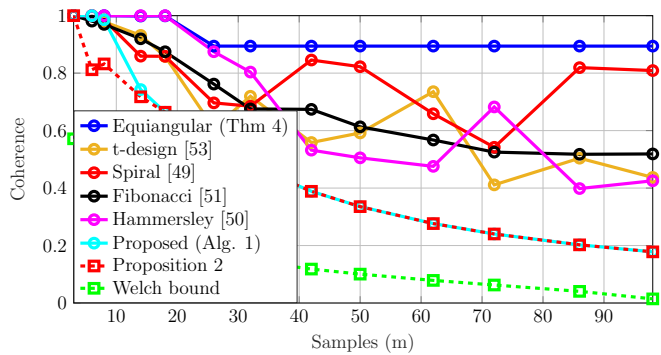
\begin{figure}[!htb]
  \begin{centering}
     \scalebox{0.6}{
%
%
\definecolor{mycolor1}{rgb}{0.92941,0.69412,0.12549}%
\definecolor{mycolor2}{rgb}{1.00000,0.00000,1.00000}%
\definecolor{mycolor3}{rgb}{0.00000,1.00000,1.00000}%
\begin{tikzpicture}

\begin{axis}[%
width=5in,
height=2.5in,
at={(3.328in,1.888in)},
scale only axis,
xmin=3,
xmax=98,
xlabel style={font=\color{white!15!black}},
xlabel={Samples (m)},
ymin=0,
ymax=1,
ylabel style={font=\color{white!15!black}},
ylabel={Coherence},
axis background/.style={fill=white},
title style={font=\bfseries},
xmajorgrids,
ymajorgrids,
legend style={at={(0,0)}, anchor=south west,legend cell align=left, align=left, draw=white!15!black}
]
\addplot [color=blue, line width=2.0pt, mark size=3.0pt, mark=o, mark options={solid, blue}]
  table[row sep=crcr]{%
3	1\\
6	1\\
8	1\\
14	0.998631050161952\\
18	0.999184666875425\\
26	0.894007333521563\\
32	0.894007333521561\\
42	0.894007333521561\\
50	0.894007333521561\\
62	0.894007333521561\\
72	0.894007333521561\\
86	0.894007333521561\\
98	0.894007333521561\\
};
\addlegendentry{Equiangular (Thm \ref{thm:fullcoherence})}

\addplot [color=mycolor1, line width=2.0pt, mark size=3.0pt, mark=o, mark options={solid, mycolor1}]
  table[row sep=crcr]{%
3	1\\
6	1\\
8	0.978653654079186\\
14	0.931427933743185\\
18	0.858062916086929\\
26	0.608478195961547\\
32	0.719461929152653\\
42	0.558546025289891\\
50	0.591898435317504\\
62	0.735463637609125\\
72	0.411119815162737\\
86	0.503997628173188\\
98	0.437004361277765\\
};
\addlegendentry{t-design \cite{womersley2018efficient}}

\addplot [color=red, line width=2.0pt, mark size=3.0pt, mark=o, mark options={solid, red}]
  table[row sep=crcr]{%
3	1\\
6	1\\
8	0.985534121909072\\
14	0.859031349054953\\
18	0.858526287374563\\
26	0.69645154125391\\
32	0.684804192780941\\
42	0.845582333109568\\
50	0.822365240002125\\
62	0.658558730475095\\
72	0.541916407026589\\
86	0.819146375726434\\
98	0.808943552600676\\
};
\addlegendentry{Spiral \cite{saff1997distributing}}

\addplot [color=black, line width=2.0pt, mark size=3.0pt, mark=o, mark options={solid, black}]
  table[row sep=crcr]{%
3	1\\
6	0.981023168630356\\
8	0.969594065126069\\
14	0.919533827424129\\
18	0.874369573823455\\
26	0.761851228315213\\
32	0.675347214239403\\
42	0.67389572298811\\
50	0.613174192256903\\
62	0.567237332038978\\
72	0.52509492173613\\
86	0.517567528659222\\
98	0.518749118539025\\
};
\addlegendentry{Fibonacci \cite{swinbank2006fibonacci}}

\addplot [color=mycolor2, line width=2.0pt, mark size=3.0pt, mark=o, mark options={solid, mycolor2}]
  table[row sep=crcr]{%
3	1\\
6	1\\
8	1\\
14	1\\
18	1\\
26	0.87446795829053\\
32	0.803695622199913\\
42	0.531997935245259\\
50	0.505177112362005\\
62	0.475522880230948\\
72	0.68230697421508\\
86	0.398309425414\\
98	0.425351182723522\\
};
\addlegendentry{Hammersley \cite{cui1997equidistribution}}

\addplot [color=mycolor3, line width=2.0pt, mark size=3.0pt, mark=o, mark options={solid, mycolor3}]
  table[row sep=crcr]{%
3	1\\
6	1\\
8	0.985534121909072\\
14	0.742828544416415\\
18	0.663863891252475\\
26	0.54937399248213\\
32	0.478699008657647\\
42	0.388675997849852\\
50	0.335406997322476\\
62	0.276422415351577\\
72	0.240244468550956\\
86	0.202427965002637\\
98	0.178054980916328\\
};
\addlegendentry{Proposed (Alg. \ref{algo_ps})}

\addplot [color=red, dashed, line width=2.0pt, mark size=3pt, mark=square, mark options={solid, red}]
  table[row sep=crcr]{%
3	1\\
6	0.811776877912782\\
8	0.832027338460355\\
14	0.717584828994402\\
18	0.663863891252475\\
26	0.54937399248213\\
32	0.478699008657647\\
42	0.388675997849852\\
50	0.335406997322476\\
62	0.276422415351576\\
72	0.240244468550955\\
86	0.202427965002637\\
98	0.178054980916328\\
};
\addlegendentry{Proposition \ref{prop:lowerbound}}

\addplot [color=green, dashed, line width=2.0pt, mark size=3pt, mark=square, mark options={solid, green}]
  table[row sep=crcr]{%
3	0.571488693325884\\
6	0.39780542762657\\
8	0.340824905430363\\
14	0.249096491442698\\
18	0.214512733147427\\
26	0.16955538549108\\
32	0.146508178831922\\
42	0.118105943749728\\
50	0.100503781525921\\
62	0.0786825372926715\\
72	0.0626751194241962\\
86	0.0405505916302066\\
98	0.0143576830751316\\
};
\addlegendentry{Welch bound}

\end{axis}
\end{tikzpicture}

   \caption{{The mutual coherence for different sampling patterns on sphere}}
    \label{Coherence_comparison}
    \end{centering}  
\end{figure}

\begin{figure}[!htb]
  \centering
     \scalebox{0.6}{\input{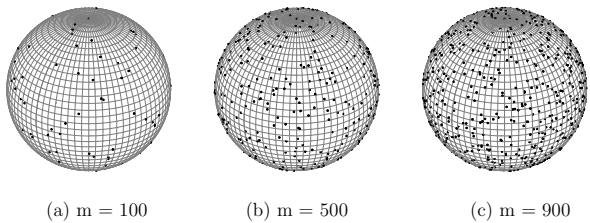}}  

   \caption{{Proposed sampling points}}
    \label{Proposed}
\end{figure}


 Algorithm \ref{algo_ps} can be extended to find pairs of $(\phi_p,\chi_p)$, $p \in [m]$, for Wigner D-functions. At each step, the algorithm searches simultaneously over the neighbor pairs, and advances similarly by updating $\Delta_k$ and $\vec{\phi}_k,\vec{\chi}_k$.
 The mutual coherence of the resulting sampling pattern is shown in Figure \ref{Coherence_comparison_Wigner_equiang} and is compared with other sampling patterns. The bandwidth is chosen as $B = 4$, hence, $N =\frac{B(2B-1)(2B+1)}{3} = 84$. It can be seen that the lower bound of Proposition \ref{prop:lowerbound} does not improve on the Welch bound for Wigner D-functions. Although the resulting sampling pattern outperforms significantly the other sampling patterns, it does not achieve the lower bound. This might be an artifact of our optimization method.

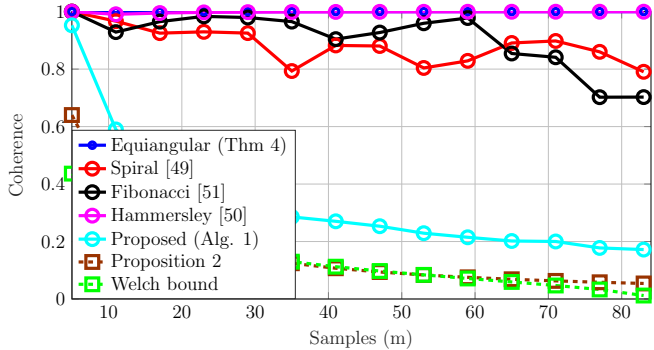
\begin{figure}[!htb]
  \centering
     \scalebox{0.6}{
%
%
\definecolor{mycolor1}{rgb}{1.00000,0.00000,1.00000}%
\definecolor{mycolor2}{rgb}{0.00000,1.00000,1.00000}%
\definecolor{mycolor3}{rgb}{0.60000,0.20000,0.00000}%
\begin{tikzpicture}

\begin{axis}[%
width= 5in,
height=2.5in,
at={(3.346in,1.735in)},
scale only axis,
xmin=5,
xmax=84,
xlabel style={font=\color{white!15!black}},
xlabel={Samples (m)},
ymin=0,
ymax=1,
ylabel style={font=\color{white!15!black}},
ylabel={Coherence},
axis background/.style={fill=white},
xmajorgrids,
ymajorgrids,
legend style={at={(0,0)}, anchor=south west, legend cell align=left, align=left, draw=white!15!black}
]
\addplot [color=blue, line width=2.0pt, mark size= 2.0pt, mark=o, mark options={solid, blue}]
  table[row sep=crcr]{%
5	1\\
11	1\\
17	1\\
23	1\\
29	1\\
35	1\\
41	1\\
47	1\\
53	1\\
59	1\\
65	1\\
71	1\\
77	1\\
83	1\\
};
\addlegendentry{Equiangular (Thm \ref{thm:fullcoherence})}

\addplot [color=red, line width=2.0pt, mark size= 4.0pt, mark=o, mark options={solid, red}]
  table[row sep=crcr]{%
5	1\\
11	0.966855165375045\\
17	0.925171690866185\\
23	0.92968235940336\\
29	0.92530758567095\\
35	0.793288606427342\\
41	0.882378093106189\\
47	0.880417150825572\\
53	0.804004481564544\\
59	0.828686139684846\\
65	0.890971268011396\\
71	0.897947992810131\\
77	0.860103816786042\\
83	0.790677966062383\\
};
\addlegendentry{Spiral \cite{saff1997distributing}}

\addplot [color=black, line width=2.0pt, mark size= 4.0pt, mark=o, mark options={solid, black}]
  table[row sep=crcr]{%
5	1\\
11	0.928749297293717\\
17	0.96514716348224\\
23	0.983517646521442\\
29	0.979498708852171\\
35	0.965254938563074\\
41	0.904300522537875\\
47	0.92726234110442\\
53	0.959463947620064\\
59	0.977755370910125\\
65	0.854323350851535\\
71	0.841142764530443\\
77	0.702281487848925\\
83	0.702599597192455\\
};
\addlegendentry{Fibonacci \cite{swinbank2006fibonacci}}

\addplot [color=mycolor1, line width=2.0pt, mark size= 4.0pt, mark=o, mark options={solid, mycolor1}]
  table[row sep=crcr]{%
5	1\\
11	0.990693075610521\\
17	0.995113783156175\\
23	0.997346038535144\\
29	0.998201558058819\\
35	0.998837466597405\\
41	0.999061195631201\\
47	0.999303756591646\\
53	0.999433877438808\\
59	0.999544288920045\\
65	0.999627587468256\\
71	0.999697085172426\\
77	0.999730836607927\\
83	0.999770892385236\\
};
\addlegendentry{Hammersley \cite{cui1997equidistribution}}

\addplot [color=mycolor2, line width=2.0pt, mark size=4pt, mark=o, mark options={solid, mycolor2}]
  table[row sep=crcr]{%
5	0.952426911204302\\
11	0.589978248675138\\
17	0.440203381072775\\
23	0.363709277326332\\
29	0.313482412812023\\
35	0.285682561017074\\
41	0.270335924611692\\
47	0.253282838320235\\
53	0.228980187093097\\
59	0.214743451622706\\
65	0.201850538006028\\
71	0.200132687025914\\
77	0.177729617524156\\
83	0.172002903115413\\
};
\addlegendentry{Proposed (Alg. \ref{algo_ps})}

\addplot [color=mycolor3, dashed, line width=2.0pt, mark size=4pt, mark=square, mark options={solid, mycolor3}]
  table[row sep=crcr]{%
5	0.64018439966448\\
11	0.35\\
17	0.241384193539774\\
23	0.183914120661846\\
29	0.148436219218273\\
35	0.124387786150071\\
41	0.107024722318206\\
47	0.0939049179872156\\
53	0.0836448225552256\\
59	0.0754026327437692\\
65	0.0686370499073718\\
71	0.0629842982467943\\
77	0.0581909239467796\\
83	0.0540749513129516\\
};
\addlegendentry{Proposition \ref{prop:lowerbound}}

\addplot [color=green, dashed, line width=2.0pt, mark size=4pt, mark=square, mark options={solid, green}]
  table[row sep=crcr]{%
5	0.436304304107961\\
11	0.282765253151662\\
17	0.21790836115787\\
23	0.178756545521525\\
29	0.151162233288431\\
35	0.129874823886379\\
41	0.112409561388812\\
47	0.0973896683464979\\
53	0.0839467459114694\\
59	0.071450447363369\\
65	0.0593445821064896\\
71	0.0469681501349579\\
77	0.0330951696160748\\
83	0.0120481927710843\\
};
\addlegendentry{Welch bound}

\end{axis}
\end{tikzpicture}

   \caption{{The mutual coherence for different sampling patterns on $\SO(3)$}}
    \label{Coherence_comparison_Wigner_equiang}
\end{figure}

A concern about our pattern search algorithm is computational complexity.  For $N=49$ and $N=100$ and the error tolerance of $\card{\mu(\vec{\theta},\vec{\phi}_k) - \mu_{\mrm{LB}}} \leq \epsilon = 10^{-4}$, the computation time of the algorithm is shown in  Figure \ref{computation_time}.
When we double the dimension of the signal, it is apparent that the computation time to achieve the same error tolerance would increase approximately fivefold.

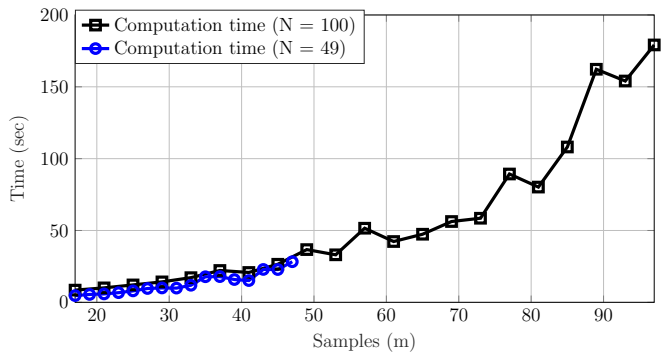
\begin{figure}[!htb]
  \centering
     \scalebox{0.6}{
%
%
\begin{tikzpicture}

\begin{axis}[%
width=5in,
height=2.5in,
at={(3.338in,1.74in)},
scale only axis,
xmin=17,
xmax=97,
xlabel style={font=\color{white!15!black}},
xlabel={Samples (m)},
ymin=0,
ymax=200,
ylabel style={font=\color{white!15!black}},
ylabel={Time (sec)},
axis background/.style={fill=white},
xmajorgrids,
ymajorgrids,
legend style={at={(0,.83)}, anchor=south west, legend cell align=left, align=left, draw=white!15!black}
]
\addplot [color=black, line width=2.0pt, mark size=3pt, mark=square, mark options={solid, black}]
  table[row sep=crcr]{%
5	0\\
9	0\\
13	0\\
17	8.493512\\
21	10.060559\\
25	12.117653\\
29	14.232518\\
33	17.17216\\
37	22.165602\\
41	20.778686\\
45	26.489834\\
49	36.705322\\
53	33.073001\\
57	51.583234\\
61	42.32053\\
65	47.395663\\
69	56.243713\\
73	58.522683\\
77	89.314563\\
81	80.213985\\
85	108.108168\\
89	162.215215\\
93	154.032785\\
97	179.139181\\
};
\addlegendentry{Computation time (N = 100)}

\addplot [color=blue, line width=2.0pt, mark size= 3.0pt, mark=o, mark options={solid, blue}]
  table[row sep=crcr]{%
5	1.644011\\
7	1.956754\\
9	2.439007\\
11	2.845903\\
13	3.553987\\
15	3.919432\\
17	4.6719\\
19	5.482128\\
21	5.844254\\
23	6.719725\\
25	7.953898\\
27	9.623431\\
29	9.999718\\
31	9.896797\\
33	11.880209\\
35	17.802003\\
37	17.892676\\
39	15.859727\\
41	15.150455\\
43	22.88695\\
45	22.817754\\
47	28.259154\\
};
\addlegendentry{Computation time (N = 49)}

\end{axis}
\end{tikzpicture}

   \caption{{Computation time of algorithm \ref{algo_ps}}}
    \label{computation_time}
\end{figure}


\section{Experimental Results}\label{Sect:Experiments}

In the previous section, we designed two equispaced sampling patterns, one for the sphere and one for the rotation group with better mutual coherence. In this section, we see if this superiority is translated to the sparse recovery performance as well. Besides, the performance of our proposed sampling patterns is compared with random sampling patterns, which are provably good with high probability for sparse recovery. Two random sampling patterns are considered. The first one is proposed in \cite{rauhut2011sparse} with the uniform measure, i.e., $\d\nu=\d\theta\d \phi $ for $\S^2$ and $\d\nu=\d\theta\d \phi \d\chi$ for $\SO(3)$. The second one is given in \cite{burq2012weighted} with the measure
$\d\nu=|\tan \theta|^{1/3}\d\theta\d \phi$ for $\S^2$ and $\d\nu=|\tan \theta|^{1/3}\d\theta\d \phi \d\chi$ for $\SO(3)$.

\subsection{Phase transition diagrams}

Consider the span of band-limited spherical harmonics with $B=10$, that is $N = B^2 = 100$. We use the  equispaced sampling pattern with $\theta_p$ as $\cos \theta_{p} = \frac{2p-m-1}{m-1},\, p \in [m]$, and the azimuth samples $\phi_p$ chosen from Algorithm \ref{algo_ps}. We solve the linear inverse problem without additive noise using the $l_1$-norm minimization package YALL1 \cite{zhang2010yall1}. The phase transition diagram of our proposed sampling pattern is plotted with 50 trials and error threshold  $10^{-3}$. Figure \ref{phase_trans_SH} compares the recovery performance of the proposed sampling pattern with several well-known sampling patterns on the sphere and, as well, random sampling. Not only our proposed sampling gives better recovery performance compared with many regular sampling patterns, it even gives a slightly better sparse recovery performance compared with the two random sampling patterns.

\begin{figure}[!htb]
  \centering
     \scalebox{0.6}{
%
%
\definecolor{mycolor1}{rgb}{1.00000,0.00000,1.00000}%
\definecolor{mycolor2}{rgb}{0.85000,0.32500,0.09800}%
\definecolor{mycolor3}{rgb}{0.92900,0.69400,0.12500}%
\definecolor{mycolor4}{rgb}{0.49400,0.18400,0.55600}%
\definecolor{mycolor5}{rgb}{0.46600,0.67400,0.18800}%
\definecolor{mycolor6}{rgb}{0.30100,0.74500,0.93300}%
\definecolor{mycolor7}{rgb}{0.63500,0.07800,0.18400}%
\definecolor{mycolor8}{rgb}{0.00000,0.44700,0.74100}%
\begin{tikzpicture}

\begin{axis}[%
width=5in,
height=2.51in,
at={(3.237in,1.551in)},
scale only axis,
point meta min=0,
point meta max=1,
axis on top,
xmin=0.08,
xmax=0.98,
xlabel style={font=\color{white!15!black}},
xlabel={m/N},
ymin=-0.05,
ymax=1.05,
ylabel style={font=\color{white!15!black}},
ylabel={s/m},
axis background/.style={fill=white},
title style={font=\bfseries},
legend style={at={(0,0.44)}, anchor=south west, legend cell align=left, align=left, draw=white!15!black},
colormap/blackwhite,
colorbar horizontal,
colorbar style={ xlabel= Success Rate}
]
\addplot [forget plot] graphics [xmin=0.035, xmax=1.025, ymin=-0.05, ymax=1.05] {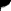};
\addplot [color=mycolor1, line width=2.0pt]
  table[row sep=crcr]{%
0.08	0.0581818181818182\\
0.14	0.218181818181818\\
0.18	0.247272727272727\\
0.26	0.250909090909091\\
0.32	0.221818181818182\\
0.42	0.178181818181818\\
0.5	0.158181818181818\\
0.62	0.134545454545455\\
0.72	0.118181818181818\\
0.86	0.103636363636364\\
0.98	0.101818181818182\\
};
\addlegendentry{Equiangular (Thm. \ref{thm:fullcoherence})}

\addplot [color=mycolor2, line width=2.0pt]
  table[row sep=crcr]{%
0.08	0.0745454545454545\\
0.14	0.305454545454545\\
0.18	0.316363636363636\\
0.26	0.352727272727273\\
0.32	0.365454545454545\\
0.42	0.374545454545454\\
0.5	0.389090909090909\\
0.62	0.414545454545454\\
0.72	0.421818181818182\\
0.86	0.48\\
0.98	0.683636363636364\\
};
\addlegendentry{Spiral \cite{saff1997distributing}}

\addplot [color=mycolor3, line width=2.0pt]
  table[row sep=crcr]{%
0.08	0.301818181818182\\
0.14	0.296363636363636\\
0.18	0.323636363636364\\
0.26	0.32\\
0.32	0.341818181818182\\
0.42	0.365454545454545\\
0.5	0.363636363636364\\
0.62	0.396363636363636\\
0.72	0.421818181818182\\
0.86	0.518181818181818\\
0.98	0.84\\
};
\addlegendentry{Fibonacci \cite{swinbank2006fibonacci}}

\addplot [color=mycolor4, line width=2.0pt]
  table[row sep=crcr]{%
0.08	0.0909090909090909\\
0.14	0.236363636363636\\
0.18	0.332727272727273\\
0.26	0.376363636363636\\
0.32	0.363636363636364\\
0.42	0.427272727272727\\
0.5	0.449090909090909\\
0.62	0.489090909090909\\
0.72	0.523636363636364\\
0.86	0.62\\
0.98	0.783636363636364\\
};
\addlegendentry{Hammersley \cite{cui1997equidistribution}}

\addplot [color=mycolor5, line width=2.0pt]
  table[row sep=crcr]{%
0.08	0.0563636363636364\\
0.14	0.318181818181818\\
0.18	0.347272727272727\\
0.26	0.372727272727273\\
0.32	0.383636363636364\\
0.42	0.427272727272727\\
0.5	0.470909090909091\\
0.62	0.527272727272727\\
0.72	0.58\\
0.86	0.672727272727273\\
0.98	0.84\\
};
\addlegendentry{Proposed (Alg. \ref{algo_ps})}

\addplot [color=mycolor6, line width=2.0pt]
  table[row sep=crcr]{%
0.08	0.101818181818182\\
0.14	0.294545454545455\\
0.18	0.336363636363636\\
0.26	0.38\\
0.32	0.381818181818182\\
0.42	0.42\\
0.5	0.430909090909091\\
0.62	0.447272727272727\\
0.72	0.470909090909091\\
0.86	0.563636363636364\\
0.98	0.774545454545455\\
};
\addlegendentry{t-design \cite{womersley2018efficient}}

\addplot [color=mycolor7, line width=2.0pt]
  table[row sep=crcr]{%
0.08	0.123636363636364\\
0.14	0.305454545454545\\
0.18	0.325454545454545\\
0.26	0.34\\
0.32	0.370909090909091\\
0.42	0.4\\
0.5	0.438181818181818\\
0.62	0.476363636363636\\
0.72	0.512727272727273\\
0.86	0.578181818181818\\
0.98	0.656363636363636\\
};
\addlegendentry{Random \cite{rauhut2011sparse}}

\addplot [color=mycolor8, line width=2.0pt]
  table[row sep=crcr]{%
0.08	0.265454545454545\\
0.14	0.327272727272727\\
0.18	0.369090909090909\\
0.26	0.365454545454545\\
0.32	0.378181818181818\\
0.42	0.443636363636364\\
0.5	0.467272727272727\\
0.62	0.518181818181818\\
0.72	0.558181818181818\\
0.86	0.649090909090909\\
0.98	0.805454545454546\\
};
\addlegendentry{Random \cite{burq2012weighted}}

\end{axis}
\end{tikzpicture}

   \caption{{Phase transition diagram of different sampling patterns on the sphere}}
    \label{phase_trans_SH}
\end{figure}

A similar result is observed for Wigner D-functions. We consider band-limited functions with $B = 4$ and $N =\frac{B(2B-1)(2B+1)}{3} = 84$. Figure \ref{phase_trans_Wigner_equiang} shows the phase transition for the $l_1$-minimization. Although our proposed sampling pattern for Wigner D-functions does not achieve the lower bound, it still outperforms other regular sampling patterns, and even slightly random sampling patterns.
\begin{figure}[!htb]
  \centering
     \scalebox{0.6}{
%
%
\definecolor{mycolor1}{rgb}{0.00000,0.44700,0.74100}%
\definecolor{mycolor2}{rgb}{0.85000,0.32500,0.09800}%
\definecolor{mycolor3}{rgb}{0.92900,0.69400,0.12500}%
\definecolor{mycolor4}{rgb}{0.49400,0.18400,0.55600}%
\definecolor{mycolor5}{rgb}{0.46600,0.67400,0.18800}%
\definecolor{mycolor6}{rgb}{0.30100,0.74500,0.93300}%
\definecolor{mycolor7}{rgb}{0.63500,0.07800,0.18400}%
\begin{tikzpicture}

\begin{axis}[%
width=5in,
height=2.5in,
at={(3.255in,1.735in)},
scale only axis,
point meta min=0,
point meta max=1,
axis on top,
xmin=0.0238095238095238,
xmax=0.9881,
xlabel style={font=\color{white!15!black}},
xlabel={m/N},
ymin=-0.01,
ymax=1.01,
ylabel style={font=\color{white!15!black}},
ylabel={s/m},
axis background/.style={fill=white},
legend style={at={(0,0.5)}, anchor=south west, legend cell align=left, align=left, draw=white!15!black},
colormap/blackwhite,
colorbar horizontal,
colorbar style={ xlabel= Success Rate}
]
\addplot [forget plot] graphics [xmin=0.0238095238095238, xmax=1.02380952380952, ymin=-0.01, ymax=1.01] {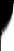};
\addplot [color=mycolor1, line width=2.0pt]
  table[row sep=crcr]{%
0.0595238095238095	0.000784313725490196\\
0.130952380952381	0.0149019607843137\\
0.202380952380952	0.0270588235294118\\
0.273809523809524	0.0125490196078431\\
0.345238095238095	0.0129411764705882\\
0.416666666666667	0.00980392156862745\\
0.488095238095238	0.00823529411764706\\
0.55952380952381	0.00627450980392157\\
0.630952380952381	0.00745098039215686\\
0.702380952380952	0.00862745098039216\\
0.773809523809524	0.00627450980392157\\
0.845238095238095	0.00549019607843137\\
0.916666666666667	0.00627450980392157\\
0.988095238095238	0.00352941176470588\\
};
\addlegendentry{Equiangular (Thm \ref{thm:fullcoherence})}

\addplot [color=mycolor2, line width=2.0pt]
  table[row sep=crcr]{%
0.0595238095238095	0.0223529411764706\\
0.130952380952381	0.292941176470588\\
0.202380952380952	0.305490196078431\\
0.273809523809524	0.284313725490196\\
0.345238095238095	0.282352941176471\\
0.416666666666667	0.265882352941176\\
0.488095238095238	0.248627450980392\\
0.55952380952381	0.241176470588235\\
0.630952380952381	0.238039215686275\\
0.702380952380952	0.230980392156863\\
0.773809523809524	0.223921568627451\\
0.845238095238095	0.212156862745098\\
0.916666666666667	0.217647058823529\\
0.988095238095238	0.209803921568627\\
};
\addlegendentry{Spiral \cite{saff1997distributing}}

\addplot [color=mycolor3, line width=2.0pt]
  table[row sep=crcr]{%
0.0595238095238095	0.00431372549019608\\
0.130952380952381	0.272941176470588\\
0.202380952380952	0.289411764705882\\
0.273809523809524	0.252156862745098\\
0.345238095238095	0.268627450980392\\
0.416666666666667	0.272549019607843\\
0.488095238095238	0.27921568627451\\
0.55952380952381	0.300392156862745\\
0.630952380952381	0.307058823529412\\
0.702380952380952	0.315686274509804\\
0.773809523809524	0.330980392156863\\
0.845238095238095	0.370588235294118\\
0.916666666666667	0.369019607843137\\
0.988095238095238	0.384705882352941\\
};
\addlegendentry{Fibonacci \cite{swinbank2006fibonacci}}

\addplot [color=mycolor4, line width=2.0pt]
  table[row sep=crcr]{%
0.0595238095238095	0.0105882352941176\\
0.130952380952381	0.242745098039216\\
0.202380952380952	0.247450980392157\\
0.273809523809524	0.241176470588235\\
0.345238095238095	0.194117647058823\\
0.416666666666667	0.194509803921569\\
0.488095238095238	0.2\\
0.55952380952381	0.18156862745098\\
0.630952380952381	0.156078431372549\\
0.702380952380952	0.149411764705882\\
0.773809523809524	0.125882352941176\\
0.845238095238095	0.123137254901961\\
0.916666666666667	0.0984313725490196\\
0.988095238095238	0.0968627450980392\\
};
\addlegendentry{Hammersley \cite{cui1997equidistribution}}

\addplot [color=mycolor5, line width=2.0pt]
  table[row sep=crcr]{%
0.0595238095238095	0.0694117647058824\\
0.130952380952381	0.302352941176471\\
0.202380952380952	0.369411764705882\\
0.273809523809524	0.385882352941176\\
0.345238095238095	0.410980392156863\\
0.416666666666667	0.44\\
0.488095238095238	0.461176470588235\\
0.55952380952381	0.501176470588235\\
0.630952380952381	0.529411764705882\\
0.702380952380952	0.567843137254902\\
0.773809523809524	0.617647058823529\\
0.845238095238095	0.669019607843137\\
0.916666666666667	0.738823529411765\\
0.988095238095238	0.905098039215686\\
};
\addlegendentry{Proposed (Alg. \ref{algo_ps})}

\addplot [color=mycolor6, line width=2.0pt]
  table[row sep=crcr]{%
0.0595238095238095	0.145098039215686\\
0.130952380952381	0.282352941176471\\
0.202380952380952	0.336078431372549\\
0.273809523809524	0.350196078431373\\
0.345238095238095	0.386666666666666\\
0.416666666666667	0.417254901960784\\
0.488095238095238	0.445882352941177\\
0.55952380952381	0.470588235294117\\
0.630952380952381	0.507058823529412\\
0.702380952380952	0.554117647058823\\
0.773809523809524	0.598039215686275\\
0.845238095238095	0.649803921568627\\
0.916666666666667	0.729411764705882\\
0.988095238095238	0.863137254901961\\
};
\addlegendentry{Random \cite{rauhut2011sparse}}

\addplot [color=mycolor7, line width=2.0pt]
  table[row sep=crcr]{%
0.0595238095238095	0.268235294117647\\
0.130952380952381	0.307450980392157\\
0.202380952380952	0.350980392156863\\
0.273809523809524	0.368627450980392\\
0.345238095238095	0.393725490196078\\
0.416666666666667	0.425882352941176\\
0.488095238095238	0.458039215686274\\
0.55952380952381	0.495686274509804\\
0.630952380952381	0.518431372549019\\
0.702380952380952	0.560392156862745\\
0.773809523809524	0.60078431372549\\
0.845238095238095	0.661568627450981\\
0.916666666666667	0.727843137254902\\
0.988095238095238	0.892941176470588\\
};
\addlegendentry{Random \cite{burq2012weighted}}

\end{axis}
\end{tikzpicture}

   \caption{{Phase transition diagram of different sampling patterns on the rotation group}}
    \label{phase_trans_Wigner_equiang}
\end{figure}
{The comparison between several recovery algorithms is presented in Figure \ref{phase_trans_algorithm} for Wigner D-functions, where besides \ac{BP}, the \ac{OMP} \cite{tropp2007signal} and the \ac{AMP} \cite{donoho2009message} are also implemented.
It can be seen that the proposed sampling pattern performs slightly better than the random sampling.
Furthermore, \ac{OMP} algorithm delivers better recovery in this case. In this case, the sparsity $s = 20$ is considered and the non-zero values are drawn from random zero mean and unit variance Gaussian distribution. Signal recoveries are conducted with $30$ trials.}
\begin{figure}[!htb]
  \centering
     \scalebox{0.6}{
%
%
\begin{tikzpicture}

\begin{axis}[%
width=5in,
height=2.5in,
at={(3.467in,1.529in)},
scale only axis,
xmin=0,
xmax=1,
xlabel style={font=\color{white!15!black}},
xlabel={m/N},
ymin=0,
ymax=1,
ylabel style={font=\color{white!15!black}},
ylabel={Success Rate},
axis background/.style={fill=white},
xmajorgrids,
ymajorgrids,
legend style={at={(0,0.37)}, anchor=south west,legend cell align=left, align=left, draw=white!15!black}
]
\addplot [color=blue,line width=2.0pt, dashed]
  table[row sep=crcr]{%
0.0595238095238095	0\\
0.130952380952381	0\\
0.202380952380952	0\\
0.273809523809524	0\\
0.345238095238095	0\\
0.416666666666667	0\\
0.488095238095238	0.233333333333333\\
0.55952380952381	0.833333333333333\\
0.630952380952381	1\\
0.702380952380952	1\\
0.773809523809524	1\\
0.845238095238095	1\\
0.916666666666667	1\\
0.988095238095238	1\\
};
\addlegendentry{Random \cite{rauhut2011sparse} BP}

\addplot [color=red,line width=2.0pt, dashed]
  table[row sep=crcr]{%
0.0595238095238095	0\\
0.130952380952381	0\\
0.202380952380952	0\\
0.273809523809524	0\\
0.345238095238095	0\\
0.416666666666667	0\\
0.488095238095238	0.233333333333333\\
0.55952380952381	0.933333333333333\\
0.630952380952381	1\\
0.702380952380952	1\\
0.773809523809524	1\\
0.845238095238095	1\\
0.916666666666667	1\\
0.988095238095238	1\\
};
\addlegendentry{Random \cite{burq2012weighted} BP}

\addplot [color=black,line width=2.0pt, dashed]
  table[row sep=crcr]{%
0.0595238095238095	0\\
0.130952380952381	0\\
0.202380952380952	0\\
0.273809523809524	0\\
0.345238095238095	0\\
0.416666666666667	0\\
0.488095238095238	0.166666666666667\\
0.55952380952381	0.933333333333333\\
0.630952380952381	1\\
0.702380952380952	1\\
0.773809523809524	1\\
0.845238095238095	1\\
0.916666666666667	1\\
0.988095238095238	1\\
};
\addlegendentry{Proposed (Alg.\ref{algo_ps}) BP}

\addplot [color=blue,line width=2.0pt, dashed, mark=square, mark size=3pt, mark options={solid, blue}]
  table[row sep=crcr]{%
0.0595238095238095	0\\
0.130952380952381	0\\
0.202380952380952	0\\
0.273809523809524	0\\
0.345238095238095	0\\
0.416666666666667	0\\
0.488095238095238	0.133333333333333\\
0.55952380952381	0.733333333333333\\
0.630952380952381	0.866666666666667\\
0.702380952380952	0.966666666666667\\
0.773809523809524	1\\
0.845238095238095	1\\
0.916666666666667	1\\
0.988095238095238	1\\
};
\addlegendentry{Random \cite{rauhut2011sparse} OMP}

\addplot [color=red,line width=2.0pt, dashed, mark=square, mark size=3pt, mark options={solid, red}]
  table[row sep=crcr]{%
0.0595238095238095	0\\
0.130952380952381	0\\
0.202380952380952	0\\
0.273809523809524	0\\
0.345238095238095	0\\
0.416666666666667	0.2\\
0.488095238095238	0.433333333333333\\
0.55952380952381	0.966666666666667\\
0.630952380952381	1\\
0.702380952380952	1\\
0.773809523809524	1\\
0.845238095238095	1\\
0.916666666666667	1\\
0.988095238095238	1\\
};
\addlegendentry{Random \cite{burq2012weighted} OMP}

\addplot [color=black,line width=2.0pt, dashed, mark=square,mark size=3pt, mark options={solid, black}]
  table[row sep=crcr]{%
0.0595238095238095	0\\
0.130952380952381	0\\
0.202380952380952	0\\
0.273809523809524	0\\
0.345238095238095	0\\
0.416666666666667	0.3\\
0.488095238095238	0.866666666666667\\
0.55952380952381	1\\
0.630952380952381	1\\
0.702380952380952	1\\
0.773809523809524	1\\
0.845238095238095	1\\
0.916666666666667	1\\
0.988095238095238	1\\
};
\addlegendentry{Proposed (Alg.\ref{algo_ps}) OMP}

\addplot [color=blue,line width=2.0pt, dashed, mark=triangle,mark size=3pt, mark options={solid, blue}]
  table[row sep=crcr]{%
0.0595238095238095	0\\
0.130952380952381	0\\
0.202380952380952	0\\
0.273809523809524	0\\
0.345238095238095	0\\
0.416666666666667	0\\
0.488095238095238	0.0333333333333333\\
0.55952380952381	0.233333333333333\\
0.630952380952381	0.6\\
0.702380952380952	0.966666666666667\\
0.773809523809524	1\\
0.845238095238095	1\\
0.916666666666667	1\\
0.988095238095238	0.966666666666667\\
};
\addlegendentry{Random \cite{rauhut2011sparse} AMP}

\addplot [color=red,line width=2.0pt, mark=triangle,mark size=3pt, mark options={solid, red}]
  table[row sep=crcr]{%
0.0595238095238095	0\\
0.130952380952381	0\\
0.202380952380952	0\\
0.273809523809524	0\\
0.345238095238095	0\\
0.416666666666667	0\\
0.488095238095238	0.0333333333333333\\
0.55952380952381	0.466666666666667\\
0.630952380952381	0.866666666666667\\
0.702380952380952	1\\
0.773809523809524	1\\
0.845238095238095	1\\
0.916666666666667	1\\
0.988095238095238	1\\
};
\addlegendentry{Random \cite{burq2012weighted} AMP}

\addplot [color=black,line width=2.0pt, mark=triangle,mark size=3pt, mark options={solid, black}]
  table[row sep=crcr]{%
0.0595238095238095	0\\
0.130952380952381	0\\
0.202380952380952	0\\
0.273809523809524	0\\
0.345238095238095	0\\
0.416666666666667	0\\
0.488095238095238	0.166666666666667\\
0.55952380952381	0.9\\
0.630952380952381	1\\
0.702380952380952	1\\
0.773809523809524	1\\
0.845238095238095	1\\
0.916666666666667	1\\
0.988095238095238	1\\
};
\addlegendentry{Proposed (Alg.\ref{algo_ps}) AMP}

\end{axis}
\end{tikzpicture}

   \caption{{Phase transition diagram of different algorithms}}
    \label{phase_trans_algorithm}
\end{figure}
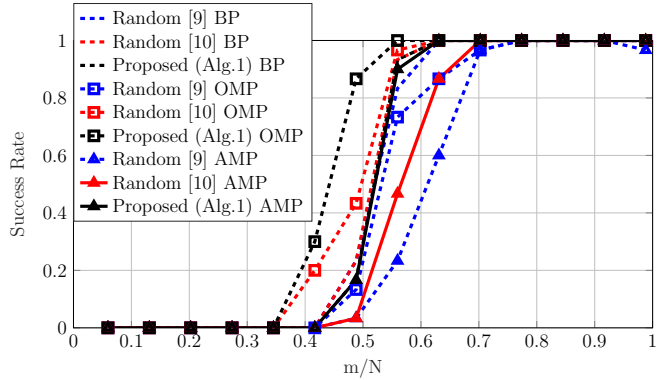

\subsection{Spherical near-field antenna measurements}
One of the main applications of sparse recovery on $\S^2$ and $\SO(3)$ is spherical near-field antenna measurement \cite{hansen1988spherical}. The expression of electromagnetic field of the antenna using Wigner D-basis coefficients is described as follows 
\begin{equation}
y(\theta,\phi,\chi)=v \sum \limits_{n=-v_{max}}^{v_{max}}\sum\limits_{h=1}^2\sum\limits_{l=1}^B\sum\limits_{k=-l}^l T_{hlk} \mathrm D_l^{k,n}(\theta,\phi,\chi)
\end{equation} 
where $y(\theta,\phi,\chi)$ is a band-limited near-field signal with Wigner D-functions as basis, $h$ denotes the both transverse electric (TE) and magnetic (TM), $n$ and $\chi$ denote order and angle to measure polarization, respectively. The bandwidth $B$ is obtained by calculating the wavenumber $k$ and minimum sphere that could cover the whole antenna with radius $r_0$. The bandwidth is given by $B = kr_0 + 10$, where the factor $10$ is usually added as a correction factor. Normally, it is desirable to measure co- and cross-polarization of the antenna and to use $n= \pm 1$, with angle $\chi \in \{0,\pi/2\}$. The goal is to estimate the spherical wave coefficients of the antenna under test, i.e., $T_{hlk}$ in near-field measurements and use it to determine far-field patterns. 

\begin{figure}[!htb]
  \centering
     \scalebox{0.65}{
%
%
%
\begin{tikzpicture}
 
\begin{axis}[%
width=5in,
height=2.5in,
at={(-3in,0in)},
scale only axis,
point meta min=-50,
point meta max=0,
xmin=0,
xmax=2,
xlabel style={font=\color{white!15!black}},
xlabel={k (Order)},
ymin=0,
ymax=2.0 ,
axis line style={draw=none},
 tick style={draw=none},
ylabel style={font=\color{white!15!black}},
ylabel={l (Degree)},
yticklabels={,,},
xticklabels={,,},
ylabel near ticks,
xlabel near ticks,
legend style={at={(0.01,0.45)}, anchor=south west, legend cell align=left, align=left, draw=white!15!black},
colormap={mymap}{[1pt] rgb(0pt)=(0,0,0); rgb(1pt)=(0.211624,0.189781,0.577676); rgb(2pt)=(0.212252,0.213771,0.626971); rgb(3pt)=(0.2081,0.2386,0.677086); rgb(4pt)=(0.195905,0.264457,0.7279); rgb(5pt)=(0.170729,0.291938,0.779248); rgb(6pt)=(0.125271,0.324243,0.830271); rgb(7pt)=(0.0591333,0.359833,0.868333); rgb(8pt)=(0.0116952,0.38751,0.881957); rgb(9pt)=(0.00595714,0.408614,0.882843); rgb(10pt)=(0.0165143,0.4266,0.878633); rgb(11pt)=(0.0328524,0.443043,0.871957); rgb(12pt)=(0.0498143,0.458571,0.864057); rgb(13pt)=(0.0629333,0.47369,0.855438); rgb(14pt)=(0.0722667,0.488667,0.8467); rgb(15pt)=(0.0779429,0.503986,0.838371); rgb(16pt)=(0.0793476,0.520024,0.831181); rgb(17pt)=(0.0749429,0.537543,0.826271); rgb(18pt)=(0.0640571,0.556986,0.823957); rgb(19pt)=(0.0487714,0.577224,0.822829); rgb(20pt)=(0.0343429,0.596581,0.819852); rgb(21pt)=(0.0265,0.6137,0.8135); rgb(22pt)=(0.0238905,0.628662,0.803762); rgb(23pt)=(0.0230905,0.641786,0.791267); rgb(24pt)=(0.0227714,0.653486,0.776757); rgb(25pt)=(0.0266619,0.664195,0.760719); rgb(26pt)=(0.0383714,0.674271,0.743552); rgb(27pt)=(0.0589714,0.683757,0.725386); rgb(28pt)=(0.0843,0.692833,0.706167); rgb(29pt)=(0.113295,0.7015,0.685857); rgb(30pt)=(0.145271,0.709757,0.664629); rgb(31pt)=(0.180133,0.717657,0.642433); rgb(32pt)=(0.217829,0.725043,0.619262); rgb(33pt)=(0.258643,0.731714,0.595429); rgb(34pt)=(0.302171,0.737605,0.571186); rgb(35pt)=(0.348167,0.742433,0.547267); rgb(36pt)=(0.395257,0.7459,0.524443); rgb(37pt)=(0.44201,0.748081,0.503314); rgb(38pt)=(0.487124,0.749062,0.483976); rgb(39pt)=(0.530029,0.749114,0.466114); rgb(40pt)=(0.570857,0.748519,0.44939); rgb(41pt)=(0.609852,0.747314,0.433686); rgb(42pt)=(0.6473,0.7456,0.4188); rgb(43pt)=(0.683419,0.743476,0.404433); rgb(44pt)=(0.71841,0.741133,0.390476); rgb(45pt)=(0.752486,0.7384,0.376814); rgb(46pt)=(0.785843,0.735567,0.363271); rgb(47pt)=(0.818505,0.732733,0.34979); rgb(48pt)=(0.850657,0.7299,0.336029); rgb(49pt)=(0.882433,0.727433,0.3217); rgb(50pt)=(0.913933,0.725786,0.306276); rgb(51pt)=(0.944957,0.726114,0.288643); rgb(52pt)=(0.973895,0.731395,0.266648); rgb(53pt)=(0.993771,0.745457,0.240348); rgb(54pt)=(0.999043,0.765314,0.216414); rgb(55pt)=(0.995533,0.786057,0.196652); rgb(56pt)=(0.988,0.8066,0.179367); rgb(57pt)=(0.978857,0.827143,0.163314); rgb(58pt)=(0.9697,0.848138,0.147452); rgb(59pt)=(0.962586,0.870514,0.1309); rgb(60pt)=(0.958871,0.8949,0.113243); rgb(61pt)=(0.959824,0.921833,0.0948381); rgb(62pt)=(0.9661,0.951443,0.0755333); rgb(63pt)=(0.9763,0.9831,0.0538)},
colorbar horizontal,
colorbar style={at={(0 ,-1.5cm )}, xlabel= Magnitude (dB)}
]
\addplot [forget plot] graphics [xmin=0, xmax=2, ymin=0, ymax=2]{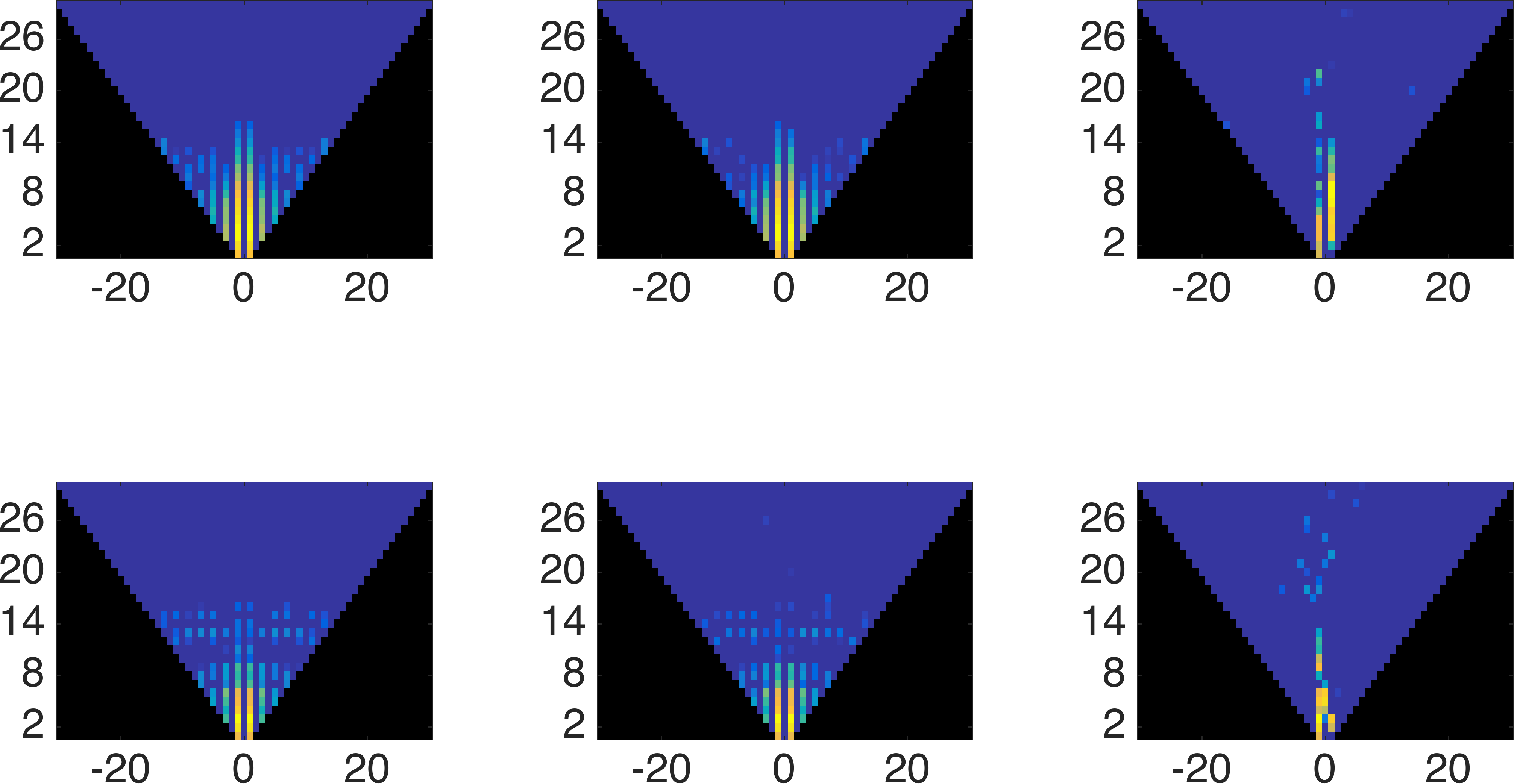};
  
\end{axis}
\begin{scope}[every node/.style={text width=7cm,align=left}]
\node [below,align=left  ,xshift = -3.5cm ,yshift = 1.3cm] at (samples.south){(d) Original (h=2)};
\node [below,align=center,xshift = -1cm  ,yshift =1.3cm] at (samples.south){(e) Proposed (h=2)};
\node [below,align=right ,xshift = 1.5cm,yshift = 1.3cm] at (samples.south){(f) Equiangular (h=2)};

\node [below,align=left  ,xshift = -3.5cm ,yshift = 5.3cm] at (samples.south){(a) Original (h=1)};
\node [below,align=center,xshift = -1cm  ,yshift = 5.3cm] at (samples.south){(b) Proposed (h=1)};
\node [below,align=right ,xshift = 1.5cm,yshift = 5.3cm] at (samples.south){(c) Equiangular (h=1)};

\node [below,align=center,xshift = -1.3 cm  ,yshift = 5.8cm] at (samples.south){k (Order)};

\end{scope}

\end{tikzpicture}

   \caption{{The original spherical wave coefficients compared with  basis pursuit recovered coefficients from our proposed and equiangular sampling pattern on the sphere}}
    \label{SWE_BP}
\end{figure}

The classical method \cite{hansen1988spherical} uses Fourier analysis with equiangular samples to get the spherical wave coefficient $T_{hlk}$ and lacks the freedom to choose different sampling patterns. 
In the real measurement systems,  the measurement time directly scales with the  number of required samples.

\begin{figure}[!htb]
  \centering
     \scalebox{0.6}{\input{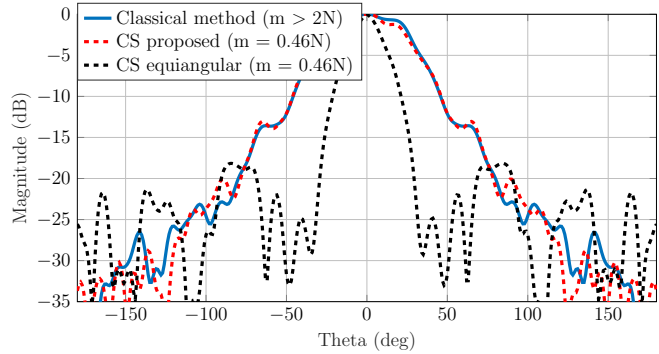}}  

   \caption{{Far-field pattern antenna horn SAS-571 $\phi$-cut = $180^\circ$ and $\chi = 0^\circ$}}
    \label{FF}
\end{figure}

In the classical method, we have to take $m \geq 2(B+1)(2B+1)$. The spherical wave coefficients, however, are sparse with respect to Wigner D-basis, which calls for compressed sensing methods. It can be seen in Figure \ref{SWE_BP} that the important spherical wave coefficients, which is represented by the high intensity of the amplitude, are compressible. In order to get better understanding of spherical near-field measurements we refer to \cite{hansen1988spherical,cornelius2015investigation}. Figure \ref{SWE_BP} shows the estimation of spherical wave coefficients by using basis pursuit for antenna horn SAS-571. The bandwidth in this case is given by $B=30$, which means $N=960$. Note that the number of coefficients are twice this number, namely 1920, because of the TE/TM coefficient $h$.
It can be seen that the proposed sampling manages to recover same spherical wave coefficients as the conventional method with smaller number of measurements, namely $m=900$. As it has been shown in \cite{culotta2018compressed}, our proposed sampling pattern can be used to obtain a smooth trajectory for robotic measurements over the sphere.

The equiangular sampling pattern fails to estimate the spherical wave coefficients. This can be seen as well in far-field signal reconstructions. Figure \ref{FF} and Figure \ref{FF2} show this for polarization $\chi = 0^\circ$ and $\chi = 90^\circ$, respectively. It is assumed that the classical method gives a very good approximation of the ground truth. In comparison, our sampling patterns matches closely the output of the classical method while the equiangular sampling pattern fails to reconstruct the far-field. 
\begin{figure}[!htb]
  \centering
     \scalebox{0.6}{\input{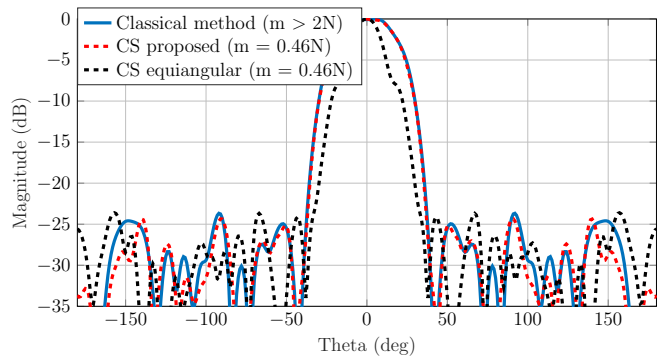}}  

   \caption{{Far-field pattern antenna horn SAS-571 $\phi$-cut = $90^\circ$ and $\chi = 90^\circ$}}
    \label{FF2}
\end{figure}

{\subsection{Earth magnetic fields}
It is also possible to apply our proposed sampling points to the \ac{IGRF} model. \ac{IGRF} model uses the gradient of magnetic scalar potential to describe the earth's geomagnetic field and it can be expressed by using spherical harmonics expansion as
\begin{equation*}
\begin{aligned}
a \sum_{l=1}^{B-1} \sum_{k=0}^{l} \bigg(\frac{a}{r} \bigg)^{l+1} \bigg(g_{l}^{k}(t) \cos k\phi + h_{l}^{k}(t) \sin k\phi \bigg) P_{l}^{k}(\cos \theta),
\end{aligned}
\end{equation*}
where $a$ is the Earth's radius, $r$ is radial distance from the Earth's center. The time varying Gauss coefficients are given as $g_{l}^{k}(t),h_{l}^{k}(t)$. In this case, $P_{l}^{k}(\cos \theta)$ is the normalized associated Legendre polynomials with degree $l$ and order $k$, where the normalization factor is $(-1)^{k}\sqrt{\frac{2(l-k)!}{(l+k)!}}$. 
In this numerical result, we will consider the $2015$ measurements model \cite{thebault2015international} with band-limited spherical harmonics $B=14$, thus the size of spherical harmonics coefficients is given by $N = 196$. The magnetic field is sampled using equiangular, Hammersley and proposed sampling points with number of samples $m = 53$. From these samples, sparse coefficients of spherical harmonics are estimated by using \ac{BP} and projected into spherical harmonics with fine grid resolution on $\theta \in [-\frac{\pi}{2},\frac{\pi}{2}]$ and $\phi \in [0,2\pi)$. Figure $\ref{IGRF}$ shows the comparison of the original and the reconstruction magnetic fields after projecting the spherical harmonics coefficients to spherical harmonics matrix with fine resolution.
\begin{figure}[!htb]
  \centering
     \scalebox{0.6}{
%
%
%
\begin{tikzpicture}
 
\begin{axis}[%
width=5in,
height=2.5in,
at={(-3in,0in)},
scale only axis,
point meta min=1.7513e+04,
point meta max=4.8746e+04,
xmin=0,
xmax=2,
xlabel style={font=\color{white!15!black}},
ymin=0,
ymax=2.0 ,
axis line style={draw=none},
 tick style={draw=none},
ylabel style={font=\color{white!15!black}},
yticklabels={,,},
xticklabels={,,},
ylabel near ticks,
xlabel near ticks,
legend style={at={(0.01,0.45)}, anchor=south west, legend cell align=left, align=left, draw=white!15!black},
colormap={mymap}{[1pt] rgb(0pt)=(0,0,0); rgb(1pt)=(0.211624,0.189781,0.577676); rgb(2pt)=(0.212252,0.213771,0.626971); rgb(3pt)=(0.2081,0.2386,0.677086); rgb(4pt)=(0.195905,0.264457,0.7279); rgb(5pt)=(0.170729,0.291938,0.779248); rgb(6pt)=(0.125271,0.324243,0.830271); rgb(7pt)=(0.0591333,0.359833,0.868333); rgb(8pt)=(0.0116952,0.38751,0.881957); rgb(9pt)=(0.00595714,0.408614,0.882843); rgb(10pt)=(0.0165143,0.4266,0.878633); rgb(11pt)=(0.0328524,0.443043,0.871957); rgb(12pt)=(0.0498143,0.458571,0.864057); rgb(13pt)=(0.0629333,0.47369,0.855438); rgb(14pt)=(0.0722667,0.488667,0.8467); rgb(15pt)=(0.0779429,0.503986,0.838371); rgb(16pt)=(0.0793476,0.520024,0.831181); rgb(17pt)=(0.0749429,0.537543,0.826271); rgb(18pt)=(0.0640571,0.556986,0.823957); rgb(19pt)=(0.0487714,0.577224,0.822829); rgb(20pt)=(0.0343429,0.596581,0.819852); rgb(21pt)=(0.0265,0.6137,0.8135); rgb(22pt)=(0.0238905,0.628662,0.803762); rgb(23pt)=(0.0230905,0.641786,0.791267); rgb(24pt)=(0.0227714,0.653486,0.776757); rgb(25pt)=(0.0266619,0.664195,0.760719); rgb(26pt)=(0.0383714,0.674271,0.743552); rgb(27pt)=(0.0589714,0.683757,0.725386); rgb(28pt)=(0.0843,0.692833,0.706167); rgb(29pt)=(0.113295,0.7015,0.685857); rgb(30pt)=(0.145271,0.709757,0.664629); rgb(31pt)=(0.180133,0.717657,0.642433); rgb(32pt)=(0.217829,0.725043,0.619262); rgb(33pt)=(0.258643,0.731714,0.595429); rgb(34pt)=(0.302171,0.737605,0.571186); rgb(35pt)=(0.348167,0.742433,0.547267); rgb(36pt)=(0.395257,0.7459,0.524443); rgb(37pt)=(0.44201,0.748081,0.503314); rgb(38pt)=(0.487124,0.749062,0.483976); rgb(39pt)=(0.530029,0.749114,0.466114); rgb(40pt)=(0.570857,0.748519,0.44939); rgb(41pt)=(0.609852,0.747314,0.433686); rgb(42pt)=(0.6473,0.7456,0.4188); rgb(43pt)=(0.683419,0.743476,0.404433); rgb(44pt)=(0.71841,0.741133,0.390476); rgb(45pt)=(0.752486,0.7384,0.376814); rgb(46pt)=(0.785843,0.735567,0.363271); rgb(47pt)=(0.818505,0.732733,0.34979); rgb(48pt)=(0.850657,0.7299,0.336029); rgb(49pt)=(0.882433,0.727433,0.3217); rgb(50pt)=(0.913933,0.725786,0.306276); rgb(51pt)=(0.944957,0.726114,0.288643); rgb(52pt)=(0.973895,0.731395,0.266648); rgb(53pt)=(0.993771,0.745457,0.240348); rgb(54pt)=(0.999043,0.765314,0.216414); rgb(55pt)=(0.995533,0.786057,0.196652); rgb(56pt)=(0.988,0.8066,0.179367); rgb(57pt)=(0.978857,0.827143,0.163314); rgb(58pt)=(0.9697,0.848138,0.147452); rgb(59pt)=(0.962586,0.870514,0.1309); rgb(60pt)=(0.958871,0.8949,0.113243); rgb(61pt)=(0.959824,0.921833,0.0948381); rgb(62pt)=(0.9661,0.951443,0.0755333); rgb(63pt)=(0.9763,0.9831,0.0538)},
colorbar horizontal,
colorbar style={at={(0 ,-0.5cm )}, xlabel= Intensity (NanoTeslas)}
]
\addplot [forget plot] graphics [xmin=0, xmax=2, ymin=0, ymax=2]{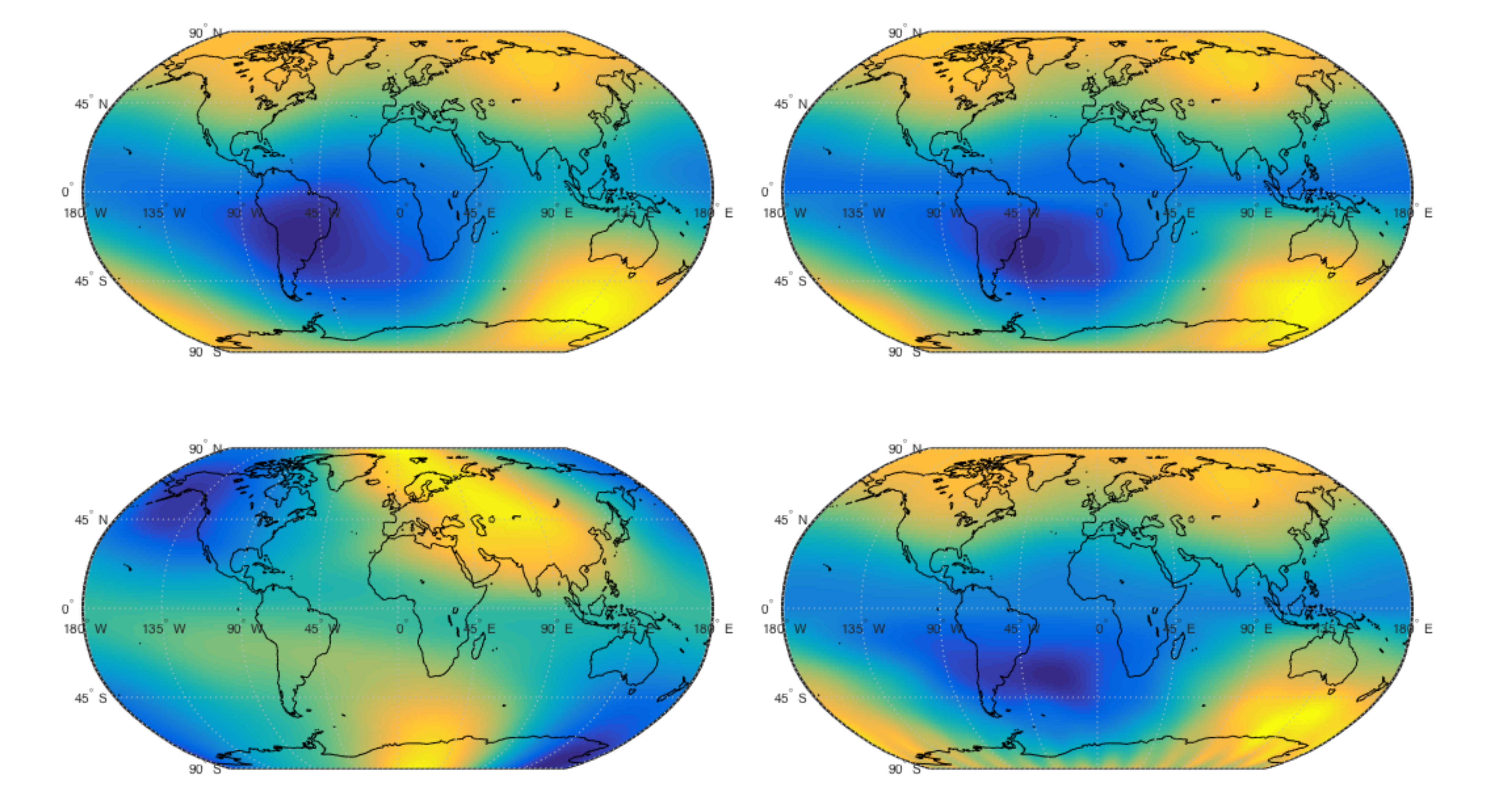};
  
\end{axis}
\begin{scope}[every node/.style={text width=7cm,align=left}]
\node [below,align=left  ,xshift = -2 cm ,yshift = 2.3cm] at (samples.south){(c) CS Equiangular };
\node [below,align=right ,xshift = -0.5 cm,yshift = 2.3cm] at (samples.south){(d) CS Hammersley};

\node [below,align=left  ,xshift = -1.5cm ,yshift = 5.5cm] at (samples.south){(a) Original};
\node [below,align=right ,xshift =  -0.5cm,yshift = 5.5cm] at (samples.south){(b) CS Proposed};


\end{scope}

\end{tikzpicture}

   \caption{Reconstruction of the earth magnetic field by using \ac{CS} with different sampling patterns}
    \label{IGRF}
\end{figure}
It can be seen the proposed sampling points perform slightly better reconstruction than Hammersley sampling points to reconstruct the earth's magnetic field by using \ac{BP}. As mentioned earlier, the equiangular sampling points deliver the worst reconstruction among the sampling points. }
 
\section{Conclusion and Future Works}\label{Sect:conclusion}
{How can we find a sampling pattern on the sphere and the rotation group that is also suitable for compressed sensing of signals? By proving \ac{RIP} property, we show that, as it is expected, random sampling patterns can provably be used for signal recovery on the rotation group. The obtained bound depends on the ambient dimension. Future works can focus on improving this dependency by using a change-of-measure similar to \cite{burq2012weighted}. It is currently not clear how the framework of \cite{burq2012weighted} can be adapted for Wigner D-functions. It is interesting to see if the bounds can be improved to only include  logarithmic and poly-logarithmic dependencies on $N$.}

{Given the interest in regular sampling patterns in many applications, we consider various existing regular patterns as well. Interestingly, many patterns with symmetric structure on azimuth and polarization  suffer from high mutual coherence and are essentially unsuitable for compressed sensing. Instead, we propose a new sampling pattern that imposes regularity on elevation. Using tools from angular momentum analysis in quantum mechanics, we show how appropriate elevation sampling patterns can yield mutually incoherent measurements. We show that it is possible to match the lower bound on the coherence for the sphere using a simple coherence minimization algorithm. The phase transition diagrams show that our proposed sampling patterns outperform other regular patterns and surpass even random sampling patterns.} 

{Future works can focus on closing the gap, for the rotation group, between the lower bound and the proposed sampling pattern. This can be done either by deriving new lower bounds or by more effective optimization approaches. 
Another line of research can focus on RIP-free recovery guarantees applicable to deterministic patterns.  We have numerically shown, by using several well-known recovery algorithms, that the proposed sampling points perform better recovery than random as well as the popular regular sampling points. However, the uniform recovery guarantees for the deterministic sampling points suffer from the quadratic bottleneck. Certain works already exist that use number-theoretic construction of \cite{bourgain_explicit_2011} for a deterministic sensing matrix. The extension of these methods to 
 $\S^2$ and $\SO(3)$ is an interesting and non-trivial problem. }

\section*{Acknowledgment}
This work is funded by DFG project (CoSSTra-MA1184 $|$ 31-1).

\appendix
\section{Proof of Theorem  \ref{thm:theorem_wigner}}

\label{proof:thm:theorem_wigner}

We have seen that Wigner d-functions are indeed weighted Jacobi polynomials. An upper bound on general weighted orthonormal functions is discussed in \cite[Theorem 6.1]{rauhut2012sparse} and also in \cite{szeg1939orthogonal}. However, we use directly the upper bound on Wigner d-functions obtained in \cite[Theorem 1.1]{haagerup2014inequalities}.

\begin{lemma} [Bound for Jacobi polynomials Wigner d-functions \cite{haagerup2014inequalities}] \label{wignersmallbound}
For Jacobi polynomials $P_{\alpha}^{(\xi,\lambda)}$ of degree $\alpha$ and of order $(\xi,\lambda) $, there exists a constant $C\geq 0$ such that:
\begin{align}
\bigg|(\sin \theta)^{1/2}  \sqrt{\gamma} \sin^{\xi}\bigg(\frac{\theta}{2}\bigg)& \cos^{\lambda}\bigg(\frac{\theta}{2}\bigg) P_{\alpha}^{(\xi,\lambda)}(\cos \theta)\bigg|\nonumber\\
&\leq C(2\alpha+\xi+\lambda+1)^{-1/4}.
\end{align} 
\end{lemma}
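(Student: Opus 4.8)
The plan is to reduce the claimed inequality to a scale-invariant sup-norm estimate for the $L^{2}$-normalized Jacobi function, and then to quote the sharp pointwise bound of Haagerup and Schlichtkrull. Writing $\sin\theta=2\sin(\theta/2)\cos(\theta/2)$, the left-hand side of Lemma~\ref{wignersmallbound} equals $\sqrt{2\gamma}\,\sin^{\xi+1/2}(\theta/2)\cos^{\lambda+1/2}(\theta/2)\,P_{\alpha}^{(\xi,\lambda)}(\cos\theta)$. From the standard normalization $\int_{-1}^{1}(1-x)^{\xi}(1+x)^{\lambda}P_{\alpha}^{(\xi,\lambda)}(x)^{2}\,\d x=\tfrac{2^{\xi+\lambda+1}}{N\gamma}$, with $N\defeq 2\alpha+\xi+\lambda+1=2l+1$, the change of variable $x=\cos\theta$ shows that
\[
\Psi(\theta)\defeq\sqrt{N\gamma}\,\sin^{\xi+1/2}\!\paran{\tfrac{\theta}{2}}\cos^{\lambda+1/2}\!\paran{\tfrac{\theta}{2}}P_{\alpha}^{(\xi,\lambda)}(\cos\theta)
\]
has unit $L^{2}([0,\pi],\d\theta)$ norm. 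Hence the left-hand side of the lemma is exactly $\sqrt{2/N}\,\Psi(\theta)$, and the lemma is equivalent to the scale-invariant statement $\norm{\Psi}_{\infty}\lesssim N^{1/4}$, uniformly over all non-negative integers $\alpha,\xi,\lambda$. This last estimate is essentially \cite[Theorem~1.1]{haagerup2014inequalities}, and unwinding the substitution $\xi=\card{k-n}$, $\lambda=\card{k+n}$, $\alpha=l-\tfrac{\xi+\lambda}{2}$ recovers the form of Lemma~\ref{wignersmallbound}; this is the route I would take.

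If instead one wanted a self-contained argument, I would establish $\norm{\Psi}_{\infty}\lesssim N^{1/4}$ using classical global asymptotics for Jacobi polynomials (Szeg\H{o} \cite{szeg1939orthogonal}, Section~8.21): away from the turning points of the associated oscillation one gets $\card{\Psi(\theta)}\lesssim 1$ after multiplying the Darboux expansion by the weight $\sin^{\xi+1/2}\cos^{\lambda+1/2}$ and tracking the normalization $\sqrt{N\gamma}$, while near the turning points the Bessel/Airy-type approximations govern the size. The dominant contribution, of order $N^{1/4}$, occurs in the regime where $\xi+\lambda$ is close to its maximal value $2l$ — equivalently $\alpha$ small — where the classically allowed region pinches to an interval of length $\asymp N^{-1/2}$ and $\Psi$ reduces to a single normalized bump of height $\asymp N^{1/4}$. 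This already shows the exponent cannot be improved: for the sectoral function $\mathrm{d}_{l}^{l,0}(\cos\theta)\propto\sin^{l}\theta$ one has $(\sin\theta)^{1/2}\mathrm{d}_{l}^{l,0}(\cos\theta)\propto\sin^{l+1/2}\theta$, whose value at $\theta=\pi/2$ equals its $L^{\infty}$ norm and, after $L^{2}$-normalization, is $\asymp l^{1/4}$, matching $(2l+1)^{-1/4}$ up to a constant.

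The main obstacle is uniformity in the orders $\xi$ and $\lambda$: unlike in the textbook statements, here they are not fixed but may grow with $l$ (up to $\xi,\lambda\sim 2l$), so one needs versions of the asymptotic formulas uniform in all three of $\alpha,\xi,\lambda$ simultaneously, which is exactly the technical heart of \cite{haagerup2014inequalities}. An alternative, asymptotics-free route works with the P\"oschl--Teller-type equation satisfied by $\Psi$, namely $\Psi''+Q\Psi=0$ with
\[
Q(\theta)=\rho^{2}-\frac{\xi^{2}-\tfrac14}{4\sin^{2}(\theta/2)}-\frac{\lambda^{2}-\tfrac14}{4\cos^{2}(\theta/2)},\qquad \rho=\tfrac{N}{2},
\]
and bounds $\norm{\Psi}_{\infty}$ via the Sonin/amplitude function $S\defeq\Psi^{2}+(\Psi')^{2}/Q$. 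Since $S'=-(\Psi')^{2}Q'/Q^{2}$, $S$ is monotone on each side of the unique interior maximum of $Q$, and together with the $H^{1}$ bound $\int_{0}^{\pi}(\Psi')^{2}\,\d\theta=\int_{0}^{\pi}Q\Psi^{2}\,\d\theta\lesssim N^{2}$ this controls $S$, hence $\Psi^{2}\le S$, on compact subsets of $\{Q>0\}$. Propagating these bounds uniformly through the turning points of $Q$, by a Langer/Bessel comparison there, is the delicate step and the one I expect to require the most care.
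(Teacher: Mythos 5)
Your main route is exactly what the paper does: Lemma~\ref{wignersmallbound} is not proved in the paper but quoted directly as \cite[Theorem~1.1]{haagerup2014inequalities}, and your reduction via the Jacobi normalization to the unit-$L^2$ function $\Psi$ (so that the left-hand side is $\sqrt{2/N}\,|\Psi(\theta)|$ with $N=2\alpha+\xi+\lambda+1=2l+1$) is a correct and clean way of matching the lemma to that cited bound. The additional self-contained sketches (Szeg\H{o} asymptotics, Sonin-function arguments) go beyond what the paper attempts and are not needed for the statement as cited.
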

\begin{corollary} [Bound for Wigner d-functions] For Wigner d-functions $\mathrm{d}_l^{k,n}(\cos \theta)$, there exists a constant $C\geq 0$ such that $ \card{(\sin \theta)^{1/2} \mathrm d_l^{k,n}(\cos \theta)} \leq C(2l +1)^{-1/4}.$
\label{corol:upperbound}
\end{corollary}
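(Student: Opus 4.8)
The plan is to obtain Corollary \ref{corol:upperbound} as an immediate specialization of Lemma \ref{wignersmallbound}, the only work being to match the two parameter sets. First I would recall from the preliminaries the closed form of the Wigner d-function,
\[
\mathrm{d}_l^{k,n}(\cos\theta)=\omega\sqrt{\gamma}\,\sin^{\xi}\!\left(\tfrac{\theta}{2}\right)\cos^{\lambda}\!\left(\tfrac{\theta}{2}\right)P_{\alpha}^{(\xi,\lambda)}(\cos\theta),
\]
with $\xi=|k-n|$, $\lambda=|k+n|$, $\alpha=l-\tfrac{\xi+\lambda}{2}$, and $\omega\in\{1,(-1)^{n-k}\}$, so that $|\omega|=1$. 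Multiplying through by $(\sin\theta)^{1/2}$ and passing to absolute values makes the prefactor $\omega$ disappear, and what remains is exactly the quantity bounded in Lemma \ref{wignersmallbound}; hence
\[
\card{(\sin\theta)^{1/2}\mathrm{d}_l^{k,n}(\cos\theta)}\le C\,(2\alpha+\xi+\lambda+1)^{-1/4}
\]
with the same universal constant $C$.

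The one arithmetic step is to note that the index on the right is in fact free of $k$ and $n$: substituting $\alpha=l-\tfrac{\xi+\lambda}{2}$ gives $2\alpha+\xi+\lambda+1=(2l-\xi-\lambda)+\xi+\lambda+1=2l+1$. This yields $\card{(\sin\theta)^{1/2}\mathrm{d}_l^{k,n}(\cos\theta)}\le C(2l+1)^{-1/4}$, uniformly over $\theta\in[0,\pi]$ and over all admissible orders $|k|,|n|\le l$, which is the claim.

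I do not expect a genuine obstacle in the corollary itself: it is pure parameter bookkeeping once Lemma \ref{wignersmallbound} --- the uniform estimate for Jacobi polynomials from \cite{haagerup2014inequalities} --- is granted, and that lemma is invoked as a black box. The point worth keeping in mind while carrying it out is the \emph{uniformity} of the bound in $k,n$ and $\theta$, since that is precisely what feeds into the preconditioning step in the proof of Theorem \ref{thm:theorem_wigner}: combining the corollary with the normalization $N_l=\sqrt{(2l+1)/(8\pi^2)}$ shows the preconditioned Wigner D-functions have modulus $O\big((2l+1)^{1/4}\big)$, and since $l\le B-1$ with $N=\tfrac{B(2B-1)(2B+1)}{3}$ one has $l=O(N^{1/3})$, so the resulting \ac{BOS} constant satisfies $K^2=O(N^{1/6})$, which is the measurement scaling asserted in Theorem \ref{thm:theorem_wigner}.
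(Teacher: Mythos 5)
Your proposal is correct and follows exactly the paper's own route: apply the Jacobi-polynomial bound of Lemma \ref{wignersmallbound} to the explicit form of $\mathrm{d}_l^{k,n}$ (with $|\omega|=1$) and use the identity $2\alpha+\xi+\lambda=2l$ to turn the exponent into $(2l+1)^{-1/4}$. The closing remarks on uniformity in $k,n,\theta$ and the downstream scaling in Theorem \ref{thm:theorem_wigner} are accurate but not needed for the corollary itself.
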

The previous corollary is easily obtained using $\xi, \lambda \geq 0 $ defined as in Definition \ref{def:WigD} and observing that $2\alpha+\xi+\lambda$ equals $2l$. We will later use this corollary to find an upper bound on  weighted Wigner D-functions. Since Wigner D-functions are orthonormal, it suffices to find a useful upper bound $K$ on them and then using it in Theorem \ref{thm:RIP_BOS}. The following proposition serves this purpose.
\begin{proposition}[Bounds on preconditioned Wigner D-functions] \label{prep_wigner}

The Wigner D-functions $\mathrm D_l^{k,n}(\theta,\phi,\chi)$ preconditioned with $(\sin\theta)^{1/2}$ are an orthonormal basis with respect to the product measure $\mathrm d\nu=\mathrm d\theta \mathrm d\phi \mathrm d\chi$ and satisfy the following upper bound: 
\begin{equation*}
\begin{aligned}
& \underset{\begin{subarray}{c}
   0\leq l\leq B-1\\
   k,n\in\{-l,\dots,l\} 
  \end{subarray}}{\textnormal{sup}} 
& & \left\Vert (\sin\theta)^{1/2} \mathrm D_l^{k,n}(\theta,\phi,\chi) \right\Vert_{\infty}  \leq C_0 N^{\frac{1}{12}},
\end{aligned}
\end{equation*}
where $N$ is the total number of Wigner D-functions of degree less than $B$.
\end{proposition}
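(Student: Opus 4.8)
The plan is to bound the preconditioned Wigner D-functions by reducing to the one-dimensional estimate on Wigner d-functions supplied by Corollary \ref{corol:upperbound}, and then to convert the resulting bound in terms of the bandwidth $B$ into a bound in terms of the ambient dimension $N$. First I would observe that the orthonormality claim is immediate: from the definition \eqref{def:WigD}, the angular factors $e^{-\i k\phi}$ and $e^{-\i n\chi}$ together with the Wigner d-function form an orthonormal system with respect to $\sin\theta\,\d\theta\,\d\phi\,\d\chi$, so absorbing the factor $(\sin\theta)^{1/2}$ into the function changes the reference measure to $\d\theta\,\d\phi\,\d\chi$ while preserving orthonormality. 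This is the same mechanism as in the spherical-harmonics preconditioning of \cite{rauhut2011sparse,burq2012weighted}.

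Next I would handle the sup-norm bound. Since $\lvert e^{-\i k\phi}\rvert = \lvert e^{-\i n\chi}\rvert = 1$, we have
\begin{equation*}
\left\lvert (\sin\theta)^{1/2}\,\mathrm D_l^{k,n}(\theta,\phi,\chi)\right\rvert = N_l \left\lvert (\sin\theta)^{1/2}\,\mathrm d_l^{k,n}(\cos\theta)\right\rvert,
\end{equation*}
with $N_l = \sqrt{(2l+1)/(8\pi^2)}$. Applying Corollary \ref{corol:upperbound} gives a constant $C$ with $\lvert (\sin\theta)^{1/2}\,\mathrm d_l^{k,n}(\cos\theta)\rvert \leq C(2l+1)^{-1/4}$, uniformly over the orders $k,n\in\{-l,\dots,l\}$ and over $\theta$. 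Multiplying by $N_l$ yields a bound of order $(2l+1)^{1/2}\cdot(2l+1)^{-1/4} = (2l+1)^{1/4}$; taking the supremum over $0\leq l\leq B-1$ produces a bound of the form $C_1 (2B-1)^{1/4} \lesssim C_1 B^{1/4}$.

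It remains to re-express $B^{1/4}$ in terms of $N$. Here I would use the stated relation $N = \tfrac{B(2B-1)(2B+1)}{3}$, which shows $N \asymp B^3$, hence $B \asymp N^{1/3}$ and $B^{1/4} \asymp N^{1/12}$; making the constants explicit, there is a constant $c>0$ with $B \leq c\,N^{1/3}$ for all $B\geq 1$, so $(2B-1)^{1/4} \leq (2cN^{1/3})^{1/4} = (2c)^{1/4} N^{1/12}$. Collecting the constants into a single $C_0$ gives the claimed bound $\sup \left\lVert (\sin\theta)^{1/2}\,\mathrm D_l^{k,n}\right\rVert_\infty \leq C_0 N^{1/12}$. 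I do not anticipate a genuine obstacle here: the only real content is Corollary \ref{corol:upperbound} (itself a consequence of the Haagerup--Schlichtkrull inequality in Lemma \ref{wignersmallbound}), which is already available; the remaining steps are the normalization bookkeeping for $N_l$, taking the supremum over degrees, and the elementary conversion $B\asymp N^{1/3}$. The mildly delicate point worth stating carefully is that the bound from Corollary \ref{corol:upperbound} is uniform in the orders $k,n$ for fixed degree $l$, which is what allows the supremum over $k,n\in\{-l,\dots,l\}$ to be absorbed without cost.
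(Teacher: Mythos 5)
Your proposal is correct and follows essentially the same route as the paper: bound the modulus by $N_l\lvert(\sin\theta)^{1/2}\mathrm d_l^{k,n}(\cos\theta)\rvert$, apply Corollary \ref{corol:upperbound} (the Haagerup--Schlichtkrull estimate) uniformly in $k,n$, take the supremum over $l\leq B-1$ to get a bound of order $(2B-1)^{1/4}$, and convert to $N^{1/12}$ via $N=\tfrac{B(2B-1)(2B+1)}{3}$ (the paper uses the explicit inequality $(2B-1)^3\leq 6N$ where you use $B\lesssim N^{1/3}$, an immaterial difference), with the orthonormality claim handled by the same change-of-measure observation the paper makes.
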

\begin{proof}
Using Corollary \ref{corol:upperbound}, we can see that :

\begin{align*}
&\norm{(\sin \theta)^{1/2} N_l \,\D l{k}{n}(\theta,\phi,\chi)}_\infty  = \norm{(\sin \theta)^{1/2} N_l\,d_l^{k,n}(\cos \theta)}_\infty\\
&\quad \leq C\,N_l\,(2l +1)^{-1/4} = \frac{C}{\sqrt{8\pi^2}}(2l+1)^{1/4}\\
&\quad\leq  \frac{C}{\sqrt{8\pi^2}}(2B-1)^{1/4}
\end{align*}
Note that the number of all orthonormal basis functions $N$ is related $B$ by $N =\frac{B(2B-1)(2B+1)}{3}$. Using  the inequality $(2B-1)^3\leq 6N$, we have for some constant $C_0$:
\begin{align*}
 \norm{(\sin \theta)^{1/2} N_l \,\mathrm D_l^{k,n}(\theta,\phi,\chi)}_\infty &\leq  \frac{C}{\sqrt{8\pi^2}}(6N)^{1/12}=C_0 N^{1/12}.
\end{align*}
\end{proof} 

From Proposition \ref{prep_wigner}, we can use Theorem \ref{thm:RIP_BOS} and \ref{thm:RIP_recovery} to prove sparse recovery guarantees for the coefficients of Wigner D-expansion using random samples of the function. 
Consider the functions $\varphi_l^{k,n}(\theta,\phi,\chi)= P(\theta)D_l^{k,n}(\theta,\phi,\chi)$, with product measure $\mathrm d\nu$. Note that the product measure $\mathrm d\nu=\mathrm  d\theta \mathrm  d\phi \mathrm d\chi$ with preconditioning function $P(\theta)^2=\sin(\theta)$ yields the uniform measure. Orthonormality can then be checked easily:
\begin{align*}
&\int_{\mathrm{SO}(3)}\varphi_l^{k,n}(\theta,\phi,\chi)\overline{\varphi_{l'}^{k',n'}\big(\theta,\phi,\chi\big)}\mathrm d\nu\\
&=\int_{\mathrm{SO}(3)}\mathrm D_l^{k,n}(\theta,\phi,\chi)\overline{\mathrm D_{l'}^{k',n'}\big(\theta,\phi,\chi\big)}\sin(\theta)\mathrm  d\theta \mathrm  d\phi \mathrm d\chi\\
&=\delta_{nn'}\delta_{kk'}\delta_{ll'}.
\end{align*}
Therefore the functions $\varphi_l^{k,n}(\theta,\phi,\chi)$ form an orthonormal basis with the upper bound provided in the Proposition \ref{prep_wigner}. Using these with Theorem \ref{thm:RIP_BOS} and \ref{thm:RIP_recovery} finishes the proof.

\bibliographystyle{IEEEtran}
\bibliography{collection}

\end{document}